\DeclareFontFamily{U}{dutchcal}{\skewchar\font=45 }
\DeclareFontShape{U}{dutchcal}{m}{n}{<-> s*[1.0] dutchcal-r}{}
\DeclareFontShape{U}{dutchcal}{b}{n}{<-> s*[1.0] dutchcal-b}{}
\DeclareMathAlphabet{\mathlcal}{U}{dutchcal}{m}{n}
\SetMathAlphabet{\mathlcal}{bold}{U}{dutchcal}{b}{n}
\newtheorem{theorem}{Theorem}[section]
\newtheorem{corollary}[theorem]{Corollary}
\newtheorem{lemma}[theorem]{Lemma}
\newtheorem*{theorem*}{Theorem}
\newtheorem*{lemma*}{Lemma}
\newtheorem{definition}[theorem]{Definition}
\newtheorem{prop}[theorem]{Proposition}
\newcommand{\ket}[1]{|#1\rangle}
\def \funcSunit {\mathcal{F}^{\prime}}
\def \funcUnit {\mathcal{F}}
\DeclareMathOperator{\dist}{dist}
\DeclareMathOperator{\diag}{diag}
\DeclareMathOperator{\Lip}{Lip}
\title{Upper bounding the quantum space complexity for computing class group and principal ideal problem}
\author{Iu-Iong Ng}
\affil{Graduate School of Mathematics, Nagoya University}
\date{}
\begin{document}

\maketitle

\vspace{-3em}
\begin{abstract}
    In this paper, we calculate the upper bound on quantum space complexity of the quantum algorithms proposed by Biasse and Song (SODA'16) for solving class group computation and the principal ideal problem using the reductions to $S$-unit group computation.
    We follow the approach of Barbulescu and Poulalion (AFRICACRYPT'23) and the framework given by de Boer, Ducas, and Fehr (EUROCRYPT'20) and Eisentr\"{a}ger, Hallgren, Kitaev, and Song (STOC'14).
\end{abstract}

\section{Introduction}

\subsection{Number theoretical problems}
The computation of number theoretical problems, or computational number theory, is important both in its own right and in terms of applications, such as developing algorithms and cryptography.
The main objects in these number theoretical problems are number fields, i.e., finite extensions of the field of rational numbers $\mathbb{Q}$, and their ring of integers.
Most quantum algorithms perform exponentially better than the best-known classical algorithms for addressing several theoretical problems or their applications.
There are some problems which are believed to be hard classically but have polynomial quantum algorithms solving them, for example, integer factorisation and discrete logarithm~\cite{Shor94}, and the problems for general number fields, like the unit group computation~\cite{Hallgren_CGP, eisentrager2014quantum}, class group computation~\cite{Hallgren_CGP, biasse2016efficient}, and the principal ideal problem~\cite{Hallgren_PIP, biasse2016efficient}.
These quantum algorithms can be reduced to a type of problem, the hidden subgroup problems (HSP), which find the subgroup hidden by the period of a function.

In this work, we focus on the ideal class group computation and solving the principal ideal problem.
The ideal class group, or simply the class group, of a number field is the finite abelian group consisting of the equivalent classes of fractional ideals of the ring of integers.
One can also define the class group for an ideal of the ring of integers as the finite abelian group consisting of invertible fractional ideals of the order.
Class groups appear in many significant problems in number theory, for instance, factoring large integers and determining principal ideals in a cyclotomic field.
The computation of ideal class groups is also commonly used in other number theoretical tasks such as computing other objects of the number field like ray class group, relative class group, and unit group~\cite{biasse2014subexponential, EH10}, or in problems like finding Bach's bound on the maximum norm of the generators~\cite{bach1990explicit}.
There is also a close relation to cryptography, for example, the classical subexponential classical algorithm for integer factorisation~\cite{LP92}, and some curve-based cryptography~\cite{BCL08} that include finding relations between elements in the class group.

An ideal is principal if it can be generated by a single element.
The principal ideal problem (PIP) determines whether the input ideal is principal and finds a generator.
Like the class group computation, PIP has applications in computing ray class groups, relative class groups, unit groups, and $S$-class groups.
Problems such as lattice isomorphism and matrix similarity can be efficiently reduced to deciding whether an ideal is principal~\cite{BHJ22}.
Since many cryptographic schemes use principal ideals generated by a short element, PIP is also related to lattice-based cryptography.
Recently, it has been shown that solving PIP in polynomial time directly induces a polynomial time attack on schemes relying on the hardness of finding the short generator of a principal ideal~\cite{CDPR15}.

\subsection{Quantum algorithms for number theoretical problems}
Instead of the ordinary HSP, most recent quantum algorithms for number theoretical problems are based on a framework called the continuous hidden subgroup problem (CHSP), proposed by Eisentr\"{a}ger, Hallgren, Kitaev, and Song~\cite{eisentrager2014quantum} as a generalisation of HSP to the group $\mathbb{R}^m$ for some non-constant dimension $m$.
As an application,~\cite{eisentrager2014quantum} applied CHSP to solve the unit group problem in polynomial time by constructing an oracle that maps from a group containing the unit group to a group of ideals and then to a field of quantum states.
Recently, Barbulescu and Poulalion~\cite{barbulescu2023special} applied the complexity framework proposed by~\cite{de2020quantum} on the unit group oracle to analyse the space complexity of the unit group algorithm and propose a modified algorithm for a special case in cyclotomic fields.

\begin{theorem}[{\cite[Corollary 37]{barbulescu2023special}}]
\label{thmBP23}
    Let $K$ be a number field of discriminant $\Delta$ and unit rank $m$.
    For any error bound $\tau >0$, there exists a quantum algorithm running in time $\operatorname{poly}(m, \log |\Delta |,\log\tau)$ using a number of qubits $O(m^5+m^4\log |\Delta |)+O(m\log\tau^{-1})$ which outputs a set of generators for the unit group of $K$.
\end{theorem}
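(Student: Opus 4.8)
The plan is to follow the by-now standard route: reduce the computation of $\mathcal{O}_K^\times$ to a continuous hidden subgroup problem (CHSP) in the sense of Eisentr\"{a}ger--Hallgren--Kitaev--Song~\cite{eisentrager2014quantum}, run the CHSP solver, and bound every register it uses through the explicit resource analysis of de Boer--Ducas--Fehr~\cite{de2020quantum}, with the number-field-specific estimates supplied as in Barbulescu--Poulalion~\cite{barbulescu2023special}. By Dirichlet's unit theorem the logarithmic embedding gives $\mathcal{O}_K^\times\cong\mu_K\times\mathbb{Z}^m$, where $\mu_K$ is the cyclic torsion part (roots of unity) and the free part is identified with a full-rank lattice $\Lambda\subset\mathbb{R}^m$ of covolume the regulator $R_K$; recovering a basis of $\Lambda$, lifting it to genuine units, and adjoining a generator of $\mu_K$ (easy) produces a generating set of $\mathcal{O}_K^\times$. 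So the task reduces to exhibiting a CHSP instance with hidden lattice $\Lambda$ and accounting for its cost; throughout, $n=[K:\mathbb{Q}]\le 2m+2$, so $n$ and $m$ are interchangeable up to constants.

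First I would construct the oracle $\mathcal{F}\colon\mathbb{R}^m\to\mathcal{H}$ sending $x$ to a truncated, discretised Gaussian superposition $\ket{\mathcal{F}(x)}\approx\sum_{v\in g_x\mathcal{O}_K}\rho_s(v)\ket{v}$ over the fractional ideal lattice $g_x\mathcal{O}_K\subset K_{\mathbb{R}}\cong\mathbb{R}^{r_1}\times\mathbb{C}^{r_2}$, where $g_x\in K_{\mathbb{R}}$ is an explicit element with logarithmic embedding $x$ and $s$ is a width of order $\det(g_x\mathcal{O}_K)^{1/n}$. Since $g_x\mathcal{O}_K=g_{x'}\mathcal{O}_K$ precisely when $x-x'\in\Lambda$, $\mathcal{F}$ is $\Lambda$-periodic; the remaining analytic input is a quantitative estimate that $\mathcal{F}$ is Lipschitz and well localised (the overlap $\langle \mathcal{F}(x),\mathcal{F}(y)\rangle$ decays controllably with $\|x-y\|$), with constants governed by the successive minima of $\mathcal{O}_K$ and the singular values of $g_x$. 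Feeding $\mathcal{F}$ into~\cite{de2020quantum} then returns, via a quantum Fourier transform, an approximate basis of $\Lambda$ using a number of qubits of the shape $\operatorname{poly}(m,\log|\Delta|)+O(m\log\tau^{-1})$, the $\tau$-term being $m$ output coordinates carried to $\log\tau^{-1}$ bits; a classical lattice-reduction/rounding step then turns the approximate basis into an honest system of fundamental units.

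What remains --- and this is the crux --- is to instantiate the implied polynomial and show it is $O(m^5+m^4\log|\Delta|)$. Three quantities must be bounded: (i) the geometry of $\Lambda$ (covolume $R_K\le|\Delta|^{1/2+o(1)}$, the classical lower bound keeping $\lambda_1(\Lambda)$ away from $0$, and an upper bound on its successive minima / on a fundamental domain), which fixes the side length and sampling resolution of the discretised domain register; (ii) the precision demanded of $\mathcal{F}(x)$ by the CHSP correctness proof, which through the Lipschitz estimate and the bound $\lambda_n(\mathcal{O}_K)\le|\Delta|^{1/2+o(1)}$ is polynomial in $m$ and $\log|\Delta|$; and (iii) the reversible implementation of $\mathcal{F}$ to that precision --- computing an LLL-reduced basis of $g_x\mathcal{O}_K$, evaluating the Minkowski embedding, preparing the Gaussian amplitudes, and uncomputing all scratch space --- whose workspace dominates and yields the leading $O(m^5+m^4\log|\Delta|)$. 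The running time is $\operatorname{poly}(m)$ oracle calls, each $\operatorname{poly}(m,\log|\Delta|,\log\tau)$, plus the Fourier transform and the classical post-processing, hence $\operatorname{poly}(m,\log|\Delta|,\log\tau)$ overall. I expect the main obstacle to be exactly this accounting: proving a Lipschitz/localisation bound sharp enough for~\cite{de2020quantum} to apply, then propagating the induced precision requirement through each reversible subroutine while certifying that no intermediate register exceeds the claimed polynomial size --- the degree-$5$ dependence on $m$ is brittle and is precisely the payoff of carrying out each of these steps as tightly as possible.
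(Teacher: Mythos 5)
You should first note that the paper does not prove this statement at all: Theorem~\ref{thmBP23} is imported verbatim from \cite[Corollary 37]{barbulescu2023special}, and the paper only reuses its ingredients later (Corollary~\ref{corspcpxty} and Theorem~\ref{thmlogLip}) to handle the $S$-unit generalisation. Your route --- encode $x\mapsto\ket{g_x\mathcal{O}_K}$ as a Gaussian-superposition CHSP oracle hiding the log-unit lattice, then run the dual-lattice sampler of \cite{de2020quantum} and count registers --- is exactly the route of the cited source, so at the level of strategy there is nothing to object to.

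The genuine gap is in the quantitative accounting, which you yourself flag as ``the crux'' but do not carry out, and the part you do assert is misdirected. In the framework the paper relies on, the qubit count is $Qm+N$ with $Q=O(m\log(m\log\frac{1}{\eta}))+O(\log(\Lip(f)/(\eta\delta\lambda_1^*)))$ (equation~\eqref{eqcpxty}), so the exponents in $O(m^5+m^4\log|\Delta|)$ are produced by two specific estimates: the bound $O(\log(1/\lambda_1^*))=O(m+\frac{1}{m}\log\Delta)$ on the dual of the hidden lattice, and above all the Lipschitz bound $\log_2\Lip(\funcUnit)=O(m^2+m\log|\Delta|)$ of Theorem~\ref{thmlogLip}; substituting these into $O(m^3\log\Lip(f)+m^4\log|\Delta|)+O(m\log\tau^{-1})$ gives $m^3\cdot(m^2+m\log|\Delta|)=m^5+m^4\log|\Delta|$. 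Your proposal only asserts that the Lipschitz constant is ``polynomial in $m$ and $\log|\Delta|$,'' which cannot pin down the exponent $5$, and it attributes the dominant term to the workspace of a reversible implementation of $\mathcal{F}$ (LLL, Minkowski embedding, amplitude preparation), whereas in the source the dominant contribution comes from the sizes of the input and output registers themselves, both of order $Qm$ with $Q\supseteq O(\log\Lip(f))$. Without stating and proving the bound $\log\Lip(\funcUnit)=O(m^2+m\log|\Delta|)$ (which in \cite{eisentrager2014quantum,barbulescu2023special} requires relating the geodesic distance between ideal lattices to the distance in the log-unit space, via discriminant bounds on an integral basis of $\mathcal{O}$), the headline exponents remain unproved.
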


Other important applications on the CHSP are given by Biasse and Song~\cite{biasse2016efficient}, which gave a polynomial time quantum algorithms for computing class groups and solving PIP in arbitrary degree number fields by reducing the problems to a problem computing the $S$-unit group.
Notice that the previous works by Hallgren, the polynomial time algorithms for class group computation~\cite{Hallgren_CGP} and for PIP for constant degree number fields~\cite{Hallgren_PIP}, utilise the HSP algorithms but not CHSP.
The $S$-unit group problem can be viewed as a generalisation of the unit group problem with $|S|$ more parameters.
Hence,~\cite{biasse2016efficient} follows the way of defining the oracle for unit groups and extends it to the one for $S$-unit groups, which makes it possible to reduce the $S$-unit group computation to CHSP.
A well-known application of the PIP algorithm is the polynomial time quantum algorithm finding the shortest vector\footnote{The length of the shortest vector is called the minimum distance or the first successive minima.} in an ideal lattice (ideal-SVP) proposed by Cramer, Ducas, and Wesolowski~\cite{CDW21}, which applies the PIP algorithm and a variant of class group computation as dominant parts.
For the applications in cryptography, since some schemes for post-quantum cryptography rely on the hardness of PIP,~\cite{biasse2016efficient} indeed broke some cryptography systems from a theoretical point of view.
Nevertheless, a precise estimation of the quantum time or space complexity is essential in evaluating the threat raised by the quantum algorithm rather than only knowing it runs in polynomial time.

\subsection{Our results}
In this work, we follow the way in~\cite{barbulescu2023special} of calculating the space complexity of the unit group algorithm, apply the framework provided by~\cite{de2020quantum} on the oracle for $S$-unit groups, which is defined in~\cite{biasse2016efficient}, and derive the space complexity for the $S$-unit group algorithm.

\begin{theorem}
\label{thm1}
    Let $K$ be a number field of discriminant $\Delta$ and unit rank $m$, and let $S$ be a set of prime ideals such that $S=\{\mathfrak{p}_1,\dots ,\mathfrak{p}_k\}$.
    For any error bound $\tau >0$, the $S$-unit group computation algorithm from \cite{biasse2016efficient} (Theorem \ref{thmBFSU}) uses
    \[O(m^5+m^4\log |\Delta |+m^4\sum_{j=1}^k\log\mathcal{N}(\mathfrak{p}_j))+O(m\log\tau^{-1})\]
    qubits, where $\mathcal{N}(\cdot )$ is the ideal norm.
\end{theorem}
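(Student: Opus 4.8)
The plan is to run the three-step pipeline behind Theorem~\ref{thmBP23} --- identify the hidden lattice, recognise the oracle as a CHSP instance, and invoke the complexity bound of de Boer--Ducas--Fehr~\cite{de2020quantum} --- but with the $S$-unit oracle of Biasse--Song~\cite{biasse2016efficient} (Theorem~\ref{thmBFSU}) in place of the unit oracle of~\cite{eisentrager2014quantum}, and then to track how $k=|S|$ and the norms $\mathcal{N}(\mathfrak{p}_1),\dots,\mathcal{N}(\mathfrak{p}_k)$ enter the bound. The first two steps are already carried out in~\cite{biasse2016efficient}: the $S$-unit group $\mathcal{O}_{K,S}^{\times}$ has rank $m+k$ and embeds, through the extended logarithmic map whose last $k$ coordinates are the norm-scaled valuations $-v_{\mathfrak{p}_j}(\cdot)\log\mathcal{N}(\mathfrak{p}_j)$, as a full-rank lattice $\Lambda_S$ --- of covolume the $S$-regulator $R_S$ --- in the hyperplane $\{\sum_i x_i=0\}$ of $\mathbb{R}^{m+1}\times\bigl(\log\mathcal{N}(\mathfrak{p}_1)\mathbb{Z}\bigr)\times\cdots\times\bigl(\log\mathcal{N}(\mathfrak{p}_k)\mathbb{Z}\bigr)$; the oracle $f_S$ is a CHSP oracle with period lattice exactly $\Lambda_S$, evaluated by computing, from a point of that space, an LLL-reduced basis of an associated ideal lattice together with the corresponding valuation tuple. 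So the whole task is the third step: extracting the quantitative parameters of $f_S$ and $\Lambda_S$ and feeding them into~\cite{de2020quantum}.

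The bound of~\cite{de2020quantum} needs a lower bound on the first minimum of the dual lattice $\Lambda_S^{\vee}$ (fixing how large and how fine a grid is required), the number of qubits to evaluate $f_S$ coherently (dominated by the internal, reversible lattice reduction), and the Lipschitz constant of $f_S$. In the unit-group case~\cite{barbulescu2023special} the first and last are polynomial in $m$ and $\log|\Delta|$, while the second is $O(m^5+m^4\log|\Delta|)$, the $m^4$ being the cost of LLL on the degree-$n$ ideal lattices (with $n=O(m)$) and the $m+\log|\Delta|$ factor their entry bit size. I would redo each ingredient for $f_S$: the ideal lattices now additionally encode the $S$-prime data, so their entry bit sizes grow to $O\bigl(m+\log|\Delta|+\sum_{j=1}^k\log\mathcal{N}(\mathfrak{p}_j)\bigr)$; and I would verify that the bound on $\lambda_1(\Lambda_S^{\vee})$ (hence the grid parameters) and the Lipschitz constant of $f_S$ pick up at most $\operatorname{poly}(m,\log|\Delta|)$ factors times $\sum_{j=1}^k\log\mathcal{N}(\mathfrak{p}_j)$, using explicit $S$-units --- e.g.\ generators of the principal ideals $\mathfrak{p}_j^{h_K}$ bounded by Bach-type estimates --- to control $R_S$ and the dual minimum. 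Crucially, because the $S$-coordinates are (scaled) integers, only the $m+1$ archimedean coordinates need a $\tau$-fine grid, so the precision contribution stays $O(m\log\tau^{-1})$ rather than $O((m+k)\log\tau^{-1})$.

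Substituting into~\cite{de2020quantum}: the lattice-reduction workspace term becomes $O(m^4)$ times the new entry bit size, i.e.\ $O\bigl(m^5+m^4\log|\Delta|+m^4\sum_{j=1}^k\log\mathcal{N}(\mathfrak{p}_j)\bigr)$; the grid and Lipschitz terms, together with the cost of storing the valuation tuple, are of the same or lower order once one absorbs factors of $k$ via $k\le\sum_{j=1}^k\log_2\mathcal{N}(\mathfrak{p}_j)$ and uses the Minkowski bound $\log|\Delta|=\Omega(m)$; and the precision term is $O(m\log\tau^{-1})$ as above. Summing gives the claimed qubit count.

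The main obstacle is the quantitative part of the second paragraph carried out carefully for a \emph{general} $S$ (not merely the polynomially bounded sets arising in the class-group and PIP reductions): bounding $R_S$, the dual minimum of $\Lambda_S^{\vee}$, and the bit sizes of the ideals $f_S$ must manipulate, and --- the genuinely delicate point --- confirming that the $S$-data enters the~\cite{de2020quantum} count only through an additive $\sum_{j=1}^k\log\mathcal{N}(\mathfrak{p}_j)$ scaled by $\operatorname{poly}(m)$, and not through a cross term $\log|\Delta|\cdot\sum_j\log\mathcal{N}(\mathfrak{p}_j)$ or a quadratic $\bigl(\sum_j\log\mathcal{N}(\mathfrak{p}_j)\bigr)^2$; this will likely require choosing $S$-unit generators whose valuations at the $\mathfrak{p}_j$ are polynomially, not exponentially, bounded. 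A more routine point is to confirm that the hypotheses of the~\cite{de2020quantum} framework (strong continuity, periodicity, oracle normalisation) hold for $f_S$ with these parameters, since that framework is stated for the unit oracle.
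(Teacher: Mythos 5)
There is a genuine gap: the proposal identifies the right pipeline but defers exactly the step that constitutes the paper's proof. The entire technical content of the paper's argument is the estimate $\log\Lip(\funcSunit)=O\bigl(m^2+m\log|\Delta|+m\sum_{j}\log\mathcal{N}(\mathfrak{p}_j)\bigr)$, obtained by chaining three distance comparisons: the known Lipschitz bound for the quantum encoding $f_q$ with respect to the geodesic distance $\dist_g$ on lattices (Theorem~\ref{thmLip}), the comparison $\dist_g(L,L')=O\bigl(n^{2n+2}+\prod_j\mathcal{N}(\mathfrak{p}_j)^{c_jn}\bigr)\cdot\dist(L,L')$ proved via Hermite normal forms of the ideal bases (Lemma~\ref{lem1}), and the comparison $\dist(L,L')=O(n)\cdot\dist_{G/U(S)}(x,y)$ proved via the elementary divisor theorem (Lemma~\ref{lem2}). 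You list the Lipschitz constant as one of three parameters that you ``would verify'' picks up only an additive $\operatorname{poly}(m)\cdot\sum_j\log\mathcal{N}(\mathfrak{p}_j)$, and you yourself flag that verification as ``the main obstacle'' and ``the genuinely delicate point.'' So the key estimate is named but not proved, and without it the claimed bound does not follow.

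A second, related problem is that your accounting of where the dominant $O(m^5+m^4\log|\Delta|)$ term comes from does not match the framework you would be invoking. You attribute it to the workspace of a reversible LLL inside the oracle evaluation ($m^4$ times the entry bit size of the ideal lattices). In the framework actually used (\cite{de2020quantum} as packaged by \cite{barbulescu2023special}, i.e.\ Corollary~\ref{corspcpxty}), the qubit count is $O(m^3\log\Lip(f)+m^4\log\Delta)+O(m\log\tau^{-1})$: both registers cost $Qm$ qubits with $Q$ controlled by $\log\bigl(\Lip(f)/(\eta\delta\lambda_1^*)\bigr)$, so the $\sum_j\log\mathcal{N}(\mathfrak{p}_j)$ must enter through $\log\Lip(\funcSunit)$, not through the oracle's internal arithmetic. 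Your alternative accounting could perhaps be made rigorous, but it would require a circuit-level analysis of the oracle that neither your sketch nor the cited references supply. Finally, your claim that only the $m+1$ archimedean coordinates need a fine grid is handled differently in the paper, which passes through the reduction of \cite{biasse2016efficient} to an oracle on $\mathbb{R}^{\hat n}$ with $\hat n=2(n_1+n_2)+|S|-1$ before applying the CHSP complexity bound; if you keep the mixed continuous/discrete domain you need to justify separately that the sampler of \cite{de2020quantum} applies to it.
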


Since the unit rank $m$ has the same order as the degree $n$, we can rewrite it in the notations as~\cite{biasse2016efficient} with a maximum.

\begin{corollary}
\label{cor1}
    The $S$-unit group computation algorithm from~\cite{biasse2016efficient} (Theorem~\ref{thmBFSU}) uses $O(n^5+n^4\log |\Delta |+n^4\cdot|S|\cdot\max_{\mathfrak{p}\in S}\{\log\mathcal{N}(\mathfrak{p})\})+O(n\log\tau^{-1})$ qubits.
\end{corollary}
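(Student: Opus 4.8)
The plan is to deduce Corollary \ref{cor1} directly from Theorem \ref{thm1} by bookkeeping on the two terms that depend on the set $S$. First I would recall the standard fact that for a number field $K$ of degree $n$ with $r_1$ real and $r_2$ complex embeddings, the unit rank is $m = r_1 + r_2 - 1$, so $m \le n - 1 = O(n)$ and $n \le 2(m+1) = O(m)$; hence $m$ and $n$ are interchangeable inside the $O(\cdot)$ up to constants, which lets me replace every $m$ in the bound of Theorem \ref{thm1} by $n$. This handles the terms $O(m^5)$, $O(m^4 \log|\Delta|)$, and $O(m \log \tau^{-1})$, turning them into $O(n^5)$, $O(n^4 \log|\Delta|)$, and $O(n \log \tau^{-1})$ respectively.

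The only remaining step is to rewrite the summation term $m^4 \sum_{j=1}^k \log \mathcal{N}(\mathfrak{p}_j)$. Writing $k = |S|$, I would bound the sum by the number of terms times the largest term:
\[
\sum_{j=1}^{k} \log \mathcal{N}(\mathfrak{p}_j) \le k \cdot \max_{1 \le j \le k} \log \mathcal{N}(\mathfrak{p}_j) = |S| \cdot \max_{\mathfrak{p} \in S} \{\log \mathcal{N}(\mathfrak{p})\}.
\]
Combining this with $m^4 = O(n^4)$ gives $m^4 \sum_{j=1}^k \log \mathcal{N}(\mathfrak{p}_j) = O\bigl(n^4 \cdot |S| \cdot \max_{\mathfrak{p} \in S}\{\log \mathcal{N}(\mathfrak{p})\}\bigr)$. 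Adding the four $O(\cdot)$ terms and collecting the first three into a single $O(\cdot)$ yields exactly the stated qubit count $O(n^5 + n^4 \log|\Delta| + n^4 \cdot |S| \cdot \max_{\mathfrak{p}\in S}\{\log \mathcal{N}(\mathfrak{p})\}) + O(n \log \tau^{-1})$.

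I do not anticipate a genuine obstacle here, since this is purely a cosmetic rewriting of Theorem \ref{thm1} into the notation of \cite{biasse2016efficient}; the one point worth stating carefully is that the replacement $m \leftrightarrow n$ is only valid up to constant factors absorbed by the $O(\cdot)$, and that passing from the sum to $|S| \cdot \max$ is an inequality, so the corollary is an upper bound and is (mildly) looser than Theorem \ref{thm1} when the norms $\mathcal{N}(\mathfrak{p}_j)$ vary widely. If anything needs care it is simply making sure the $\tau$-dependent term is kept separate from the others, matching the two-summand form in the statement.
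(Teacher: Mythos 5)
Your proposal is correct and matches the paper's (one-line) justification exactly: the paper also obtains Corollary~\ref{cor1} from Theorem~\ref{thm1} by noting that $m$ and $n$ have the same order and bounding $\sum_{j=1}^{k}\log\mathcal{N}(\mathfrak{p}_j)\le |S|\cdot\max_{\mathfrak{p}\in S}\{\log\mathcal{N}(\mathfrak{p})\}$. Your added remark that the $|S|\cdot\max$ form is a (possibly looser) upper bound is accurate and consistent with the corollary being stated as an upper bound.
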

Our result can be viewed as a generalisation of the space complexity given by~\cite{barbulescu2023special} by taking the set $S$ to be empty.
Applying our result to the PIP quantum algorithm derived by~\cite{biasse2016efficient}, which with an input ideal $\mathfrak{a}$ runs in polynomial time in the parameters $n,\log\mathcal{N}(\mathfrak{a}),\log\Delta$, we obtain the quantum space complexity for PIP.

\begin{corollary}
\label{corPIP}
    The principal ideal problem algorithm (\cite[Theorem 1.3]{biasse2016efficient}) uses $O(n^5+n^4\log\Delta+n^4\log\mathcal{N}(\mathfrak{a}))$ qubits.
\end{corollary}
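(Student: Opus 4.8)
The plan is to invoke the reduction, used by Biasse and Song to prove \cite[Theorem 1.3]{biasse2016efficient}, from the principal ideal problem on input $\mathfrak{a}$ to an $S$-unit group computation, and then to substitute the bound of Theorem~\ref{thm1} for a suitable $S$. Concretely, one first factors $\mathfrak{a}=\prod_{j=1}^{k}\mathfrak{p}_j^{e_j}$ into prime ideals; this amounts to a quantum integer-factoring step on $\mathcal{N}(\mathfrak{a})$ followed by polynomial-time classical post-processing, and costs only $O(\log\mathcal{N}(\mathfrak{a}))$ qubits together with lower-order terms. One then sets $S=\{\mathfrak{p}_1,\dots,\mathfrak{p}_k\}$, the support of $\mathfrak{a}$. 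With this choice of $S$, the ideal $\mathfrak{a}$ is principal precisely when the exponent vector $(e_1,\dots,e_k)$ lies in the image of the valuation map $\mathcal{O}_{K,S}^{\times}\to\mathbb{Z}^{k}$, and a generator, when one exists, is recovered up to a unit from the same data; so once the $S$-unit group and its associated (log-)embeddings have been computed, the remaining work -- a membership test and a linear-algebra back-substitution over the computed lattice -- is classical and uses no qubits.

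Hence the quantum space of the PIP algorithm is, up to absorbed lower-order terms, that of the $S$-unit group computation of Theorem~\ref{thm1} for $S=\{\mathfrak{p}_1,\dots,\mathfrak{p}_k\}$, namely
\[
O\Big(m^5+m^4\log|\Delta|+m^4\sum_{j=1}^{k}\log\mathcal{N}(\mathfrak{p}_j)\Big)+O(m\log\tau^{-1}).
\]
The key estimate is that $S$ is controlled by $\mathfrak{a}$: from $\mathcal{N}(\mathfrak{a})=\prod_{j=1}^{k}\mathcal{N}(\mathfrak{p}_j)^{e_j}$ and $e_j\ge 1$ we get $\sum_{j=1}^{k}\log\mathcal{N}(\mathfrak{p}_j)\le\log\mathcal{N}(\mathfrak{a})$ (and likewise $k\le\log_2\mathcal{N}(\mathfrak{a})$). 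Combining this with $m=O(n)$ turns the first group of terms into $O(n^5+n^4\log\Delta+n^4\log\mathcal{N}(\mathfrak{a}))$, which already matches the claimed bound; note that it is essential to use the refined $\sum_j\log\mathcal{N}(\mathfrak{p}_j)$-form of Theorem~\ref{thm1} rather than the cruder $|S|\cdot\max_{\mathfrak{p}\in S}\log\mathcal{N}(\mathfrak{p})$-form of Corollary~\ref{cor1}, since the latter would introduce a spurious quadratic dependence on $\log\mathcal{N}(\mathfrak{a})$.

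It remains to absorb the term $O(m\log\tau^{-1})$, and this is the one place where a genuine (if standard) computation is needed. I would trace through the precision required by the PIP reduction in \cite{biasse2016efficient}: the output embeddings must be accurate enough to decide membership of the exponent vector in the computed $S$-unit lattice and to round the recovered approximate generator to the exact one, so it suffices to take $\tau^{-1}$ inverse-polynomially small relative to the relevant lattice gap. Using the known bounds on the $S$-regulator and on $\lambda_1$ of the $S$-unit lattice, this gives $\log\tau^{-1}=O(\operatorname{poly}(n,\log|\Delta|,\log\mathcal{N}(\mathfrak{a})))$ with exponents small enough that $m\log\tau^{-1}$ is dominated by $n^5+n^4\log\Delta+n^4\log\mathcal{N}(\mathfrak{a})$. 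The main obstacle is thus not the space bookkeeping, which is routine once Theorem~\ref{thm1} is in hand, but verifying that the precision demanded by the principal-ideal reduction stays below this threshold.
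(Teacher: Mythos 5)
Your proposal follows essentially the same route as the paper: factor $\mathfrak{a}$ via a quantum integer-factoring step costing $O(\log\mathcal{N}(\mathfrak{a}))$ qubits, take $S$ to be the support of $\mathfrak{a}$, apply Theorem~\ref{thm1} in its $\sum_j\log\mathcal{N}(\mathfrak{p}_j)$ form together with $\sum_j\log\mathcal{N}(\mathfrak{p}_j)=O(\log\mathcal{N}(\mathfrak{a}))$, and observe that the remaining valuation linear algebra is classical. Your closing discussion of why the $O(m\log\tau^{-1})$ term can be absorbed is in fact more careful than the paper, which simply drops that term from the corollary statement without comment.
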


From this result, we expect the number of qubits used for PIP or its applications, e.g., ideal-SVP, to be large with respect to the degree of the input number field.
Therefore, our results indicate that, in general, implementing these algorithms may require a large-scale quantum computer.

\section{Preliminaries}

\subsection{Number field}

We follow the definitions in~\cite{biasse2016efficient}.
Let $K$ be a number field with degree $n$, i.e., $n=[K:\mathbb{Q}]$.
Denote by $n_1$ and $n_2$ the number of real embeddings and the number of pairs of complex embeddings, respectively.
Then $n=n_1+2n_2$ and the unit rank of $K$ is defined as $m=n_1+n_2-1$.
The absolute norm of an element $x\in K$ is defined as $\mathcal{N}(x)\coloneqq\prod_{\sigma}\sigma (x)\in\mathbb{Q}$, where $\sigma$ denotes the $n$ embeddings.

We denote the ring of integers of $K$ by $\mathcal{O}$.
Notice that every ring of integers of a number field is a Dedekind ring, and which implies that any ideal $I$ of $\mathcal{O}$ can be uniquely factored into a product of powers of prime ideals, i.e., $I=\prod\mathfrak{p}^{v_{\mathfrak{p}}(I)}$ with $v_{\mathfrak{p}}(I)\in\mathbb{Z}$ and only finitely many of them are non-zero.
The norm of an ideal is defined by $\mathcal{N}(I)=|\mathcal{O}/I|$, and if it is a principal ideal such that $I=(\alpha)$, then $\mathcal{N}(I)=\mathcal{N}(\alpha)$.
The unit group $\mathcal{O}^*$ consists of invertible elements, i.e., the units, in $\mathcal{O}$.
For $\alpha\in\mathcal{O}^*$, $(\alpha)=\mathcal{O}$.

\subsubsection{S-unit group}

We now define the $S$-unit group.
The definition is equivalent to the one in~\cite{biasse2016efficient}.
We refer to~\cite{neukirch2013algebraic} for more details.
For a Dedekind domain $\mathlcal{o}$, define $\mathlcal{o}(X)=\{\frac{f}{g}\, |\, f, g\in\mathlcal{o}, g\not\equiv 0\mod{\mathfrak{p}}\text{ for }\mathfrak{p}\in X\}$, where $X$ is a set of nonzero prime ideals of $\mathlcal{o}$ which contains almost all prime ideals of $\mathlcal{o}$.
Let $S=\{\mathfrak{p}_1,\dots\mathfrak{p}_k\}$ be a finite set of prime ideals of $\mathcal{O}$, and let $X_S$ be the set of all prime ideals that do not belong to $S$.
The ring $\mathcal{O}(X_S)$ has the units called the $S$-units.
If $S=\emptyset$, it turns out that $\mathcal{O}(X_S)=\mathcal{O}$, which is the special case that the $S$-units are exactly the units.\footnote{For the definition for $S$-units given by the valuations or places, it is the case that when $S=S_{\infty}$, the Archimedean valuations or infinite places, then the $S$-units are the units. See, for example,~\cite{narkiewicz2013elementary}.}
Otherwise, the $S$-units are the elements $\alpha\in K$ such that $(\alpha)=\mathfrak{p}_1^{e_1}\cdots\mathfrak{p}_k^{e_k}$ for some $e_1,\dots , e_k\in\mathbb{Z}$.
Then the $S$-units from a multiplicative group $U(S)$ and satisfy that for $\alpha\in U(S)$,
\begin{equation}
\label{eqSU}
    \alpha\cdot\mathcal{O}\cdot\mathfrak{p}_1^{-v_{\mathfrak{p}_1}(\alpha)}\cdots\mathfrak{p}_k^{-v_{\mathfrak{p}_k}(\alpha)}=\mathcal{O}.
\end{equation}

\subsubsection{$E$-ideals}

We denote $E=\mathbb{R}^{n_1}\times\mathbb{C}^{n_2}$  to be the field for $K$ under canonical embeddings, i.e., for $z\in K$, $(\sigma_1(z),\dots ,\sigma_{n_1+n_2}(z))\in E$, which is called the conjugate vector representation.
Since $\mathcal{O}$ has the structure as a $\mathbb{Z}$-lattice, its image under the embedding $K\rightarrow E$, which we denote by $\underline{\mathcal{O}}$, inherits the lattice structure and can be identified as an $\mathbb{R}^n$-lattice.
So do the fractional ideals of $\mathcal{O}$ with lattice structures correspond to fractional ideals (lattices) in $E$.

\begin{definition}[{\cite[Definition 2.1]{biasse2016efficient}}]
    An $E$-ideal is a lattice $\Lambda\subseteq E$ such that $\forall x\in\underline{\mathcal{O}}, x\Lambda\subseteq\Lambda$.
\end{definition}

Here state two theorems related to $E$-ideals that will be used in our proofs.

    \begin{theorem}[{\cite[Theorem 2.4.13]{Cohen10}}]
    \label{thmEDT}
        Let $L$ be a $\mathbb{Z}$-submodule of a free module $\Tilde{L}$ and of the same rank.
        Then there exist positive integers $d_1, \dots , d_n$ satisfying the following conditions:
        \begin{enumerate}
            \item For every $i$ such that $1\leq i<n$ we have $d_{i+1}|d_i$.
            \item             $[\Tilde{L}: L]=d_1\cdots d_n$.
            \item There exists a $\mathbb{Z}$-basis $(v_1, \dots v_n)$ of $\Tilde{L}$ such that $(d_1v_1, \dots ,d_nv_n)$ is a $\mathbb{Z}$-basis of $L$.
        \end{enumerate}
        Furthermore, the $d_i$ are uniquely determined by $L$ and $\Tilde{L}$.
    \end{theorem}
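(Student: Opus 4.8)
My plan is to prove this as the classical elementary divisor (``stacked basis'') theorem over the PID $\mathbb{Z}$: I would establish existence by induction on the common rank $n$, and uniqueness by realizing the $d_i$ as invariants of the pair $(\tilde L, L)$ that do not depend on any choice of basis.

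For existence I would argue as follows (taking $n\ge 1$, the case $n=0$ being vacuous). Among all $\mathbb{Z}$-linear maps $\phi\colon\tilde L\to\mathbb{Z}$, the ideals $\phi(L)\subseteq\mathbb{Z}$ cannot all vanish, since $\operatorname{rank}L=n\ge 1$ and $\bigcap_\phi\ker\phi=0$ in a free module; so the positive generators of these ideals have a least element $d_n$, attained by some $\phi_n$ with $\phi_n(w)=d_n$ for a suitable $w\in L$. The crucial step is a minimality argument: for any $\psi\colon\tilde L\to\mathbb{Z}$ one must have $d_n\mid\psi(w)$, since otherwise, writing $g=\gcd(d_n,\psi(w))=\alpha d_n+\beta\psi(w)<d_n$, the functional $\alpha\phi_n+\beta\psi$ sends $w$ to $g$ and hence $(\alpha\phi_n+\beta\psi)(L)$ has a positive generator dividing $g<d_n$, contradicting the choice of $d_n$. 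Letting $\psi$ run over coordinate functionals shows $w=d_nv_n$ for some $v_n\in\tilde L$ with $\phi_n(v_n)=1$, so that $\tilde L=\mathbb{Z}v_n\oplus\ker\phi_n$ and, using $\phi_n(L)=(d_n)$, also $L=\mathbb{Z}w\oplus(L\cap\ker\phi_n)$. Since $\ker\phi_n$ is free of rank $n-1$ and $L\cap\ker\phi_n$ has the same rank, the induction hypothesis supplies a basis $v_1,\dots,v_{n-1}$ of $\ker\phi_n$ with $d_1v_1,\dots,d_{n-1}v_{n-1}$ a basis of $L\cap\ker\phi_n$ and $d_{i+1}\mid d_i$ for $1\le i\le n-2$; adjoining $v_n$ and $d_nv_n$ gives conclusion (3). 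The remaining divisibility $d_n\mid d_{n-1}$ I would get by the same combination trick: choosing $\psi\colon\ker\phi_n\to\mathbb{Z}$ and $w'\in L\cap\ker\phi_n$ with $\psi(w')=d_{n-1}$ and extending $\psi$ to $\tilde L$ by $\psi(v_n)=0$, the functional $\phi_n+\psi$ maps $w\mapsto d_n$ and $w'\mapsto d_{n-1}$, so its image ideal on $L$ contains $\gcd(d_n,d_{n-1})$; minimality of $d_n$ then forces $\gcd(d_n,d_{n-1})=d_n$, which is (1).

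For the index formula (2), conclusion (3) gives a surjection $\tilde L\to\bigoplus_{i=1}^n\mathbb{Z}/d_i\mathbb{Z}$ sending $\sum a_iv_i$ to $(a_i\bmod d_i)_i$ with kernel exactly $L$, so $[\tilde L:L]=\prod_{i=1}^n d_i$, finite because each $d_i\ge 1$. For uniqueness I would fix bases of $\tilde L$ and $L$, let $M$ be the invertible $n\times n$ integer matrix expressing the basis of $L$ in that of $\tilde L$, and recall that the gcd $\Delta_k$ of the $k\times k$ minors of $M$ is unchanged under $M\mapsto UMV$ with $U,V\in\mathrm{GL}_n(\mathbb{Z})$, i.e.\ under any change of bases; since in the stacked basis of (3) the matrix is $\operatorname{diag}(d_1,\dots,d_n)$, we get $\Delta_k=d_{n-k+1}\cdots d_n$, so $d_i=\Delta_{n-i+1}/\Delta_{n-i}$ depends only on $(\tilde L,L)$ (alternatively, one may cite uniqueness of invariant factors for the finite abelian group $\tilde L/L$). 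The only genuinely nontrivial point is the minimality/gcd-combination argument used in the existence proof; once that is secured, splitting off the rank-one summand $\mathbb{Z}v_n$, running the induction, and computing the index and the minors are all routine.
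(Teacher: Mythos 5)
Your proof is correct. Note that the paper itself gives no proof of this statement: it is imported verbatim as Theorem 2.4.13 of Cohen's book, where it is established constructively by reducing the matrix expressing a basis of $L$ in a basis of $\tilde L$ to Smith normal form via integer row and column operations. Your argument is the standard abstract ``stacked basis'' proof instead: you extract the minimal positive value $d_n$ of functionals on $L$, use the gcd-combination trick to show $d_n$ divides $\psi(w)$ for every functional $\psi$ (hence $w=d_nv_n$ and $\tilde L=\mathbb{Z}v_n\oplus\ker\phi_n$, $L=\mathbb{Z}w\oplus(L\cap\ker\phi_n)$), induct on the rank, and obtain the remaining divisibility $d_n\mid d_{n-1}$ by the same trick; uniqueness via the $\mathrm{GL}_n(\mathbb{Z})$-invariance of the gcds of $k\times k$ minors is also standard and correct. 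You have correctly adapted everything to the paper's (reversed) convention $d_{i+1}\mid d_i$, so that $d_n$ is the smallest invariant and $\Delta_k=d_{n-k+1}\cdots d_n$. The matrix-reduction route has the advantage of being directly algorithmic (which is what Cohen's book and this paper actually need downstream, e.g.\ for HNF computations), while your route is basis-free and generalizes verbatim to any PID; both are complete proofs of the statement as given.
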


\begin{theorem}[{\cite[Theorem 4.7.4]{Cohen10}}]
    \label{thmHNF}
    Let $M$ be a module with denominator 1 with respec to a given $R$ (i.e. $M\subset R$), and $W=(w_{i, j})$ its Hermite normal form (HNF) with respect to a basis $\alpha_1,\dots ,\alpha_n$ of $R$.
    Then the product of the $w_{i, i}$ (i.e. the determinant of $W$) is equal to the index $[R:M]$.
\end{theorem}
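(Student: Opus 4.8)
The plan is to reduce everything to the classical fact that, for a full-rank $\mathbb{Z}$-sublattice $M$ of a free $\mathbb{Z}$-module $R$ of rank $n$, the index $[R:M]$ equals the absolute value of the determinant of a basis-change matrix; the Hermite normal form conventions then finish the argument in one line. First I would fix notation: the hypothesis ``$M$ has denominator $1$ with respect to $R$'' just means $M\subseteq R$, and presenting $M$ in HNF with respect to $\alpha_1,\dots,\alpha_n$ means that the columns (or rows, depending on convention) of $W=(w_{i,j})$ record a $\mathbb{Z}$-basis $\beta_1,\dots,\beta_n$ of $M$, namely $\beta_j=\sum_{i=1}^{n}w_{i,j}\alpha_i$. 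In particular $M$ has rank $n$, so $[R:M]$ is finite. (The degenerate case where $M$ has rank $<n$, signalled by a zero diagonal entry of $W$, makes both sides infinite and can be dismissed at the outset; so assume $M$ is full rank.)

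The heart of the proof is the lemma $[R:M]=\lvert\det W\rvert$, and here I would invoke Theorem~\ref{thmEDT} with $L=M$ and $\Tilde{L}=R$. It yields positive integers $d_1,\dots,d_n$ with $d_{i+1}\mid d_i$, together with a $\mathbb{Z}$-basis $v_1,\dots,v_n$ of $R$ such that $d_1v_1,\dots,d_nv_n$ is a $\mathbb{Z}$-basis of $M$, and moreover $[R:M]=d_1\cdots d_n$. Now $(\alpha_i)$ and $(v_i)$ are two bases of $R$, hence related by some $U\in\mathrm{GL}_n(\mathbb{Z})$, while $(\beta_j)$ and $(d_iv_i)$ are two bases of $M$, hence related by some $V\in\mathrm{GL}_n(\mathbb{Z})$. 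Composing the change-of-basis maps expresses $W$ as $U\,\diag(d_1,\dots,d_n)\,V$ (up to a transpose, according to the row/column convention in force), so $\lvert\det W\rvert=\lvert\det U\rvert\cdot(d_1\cdots d_n)\cdot\lvert\det V\rvert=d_1\cdots d_n=[R:M]$, using $\lvert\det U\rvert=\lvert\det V\rvert=1$.

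To conclude, I would appeal to the defining properties of the HNF: $W$ is triangular with strictly positive diagonal entries $w_{1,1},\dots,w_{n,n}$. Therefore $\det W=\prod_{i=1}^{n}w_{i,i}>0$, hence $\lvert\det W\rvert=\prod_{i=1}^{n}w_{i,i}$, and combining with the lemma gives $[R:M]=\prod_{i=1}^{n}w_{i,i}$, which is the claim.

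The only real obstacle is the index-equals-determinant lemma; with Theorem~\ref{thmEDT} available it is almost immediate, which is presumably why that theorem is stated just before this one. If one preferred to avoid it, the alternative would be a direct argument: tensor with $\mathbb{Q}$ to see $\det W\neq 0$, reduce to $R=\mathbb{Z}^n$, and then either enumerate the $\prod_i w_{i,i}$ cosets of $M$ in $R$ by induction on $n$ exploiting the triangular shape of $W$, or compute $\lvert R/M\rvert$ from the Smith normal form of $W$. I would take the Theorem~\ref{thmEDT} route as the cleanest.
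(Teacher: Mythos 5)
The paper states this result purely as a citation of Cohen's Theorem 4.7.4 and gives no proof of it, so there is no in-paper argument to compare against; your proof is correct and is the standard one, using exactly the tool the paper sets up for this purpose (Theorem~\ref{thmEDT}) to get $[R:M]=d_1\cdots d_n=\lvert\det W\rvert$, and then the triangularity and positive diagonal of the HNF to get $\lvert\det W\rvert=\prod_i w_{i,i}$. The only cosmetic slip is in your aside on the degenerate case: if $M$ had rank $<n$ the product of diagonal entries would be $0$ rather than infinite (while the index is infinite), so the two sides would not literally agree; but that case is excluded here since the modules in question are fractional ideals, hence full rank, and your main argument is unaffected.
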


\subsection{The unit group computation algorithm}

In this subsection, we review the ideas for the unit group algorithm from~\cite{eisentrager2014quantum}.
By the properties of units, one can identify $\mathcal{O}^*$ as a subgroup of $\hat{G}=\mathbb{R}^{n_1+n_2-1}\times\mathbb{Z}_2^{n_1}\times(\mathbb{R}/\mathbb{Z})^{n_2}$.
To see this, we consider the mapping $\varphi :\hat{G}\rightarrow E$ translating between the log coordinates and the conjugate vector representation.
\[\begin{aligned}
\varphi : &(u_1,\dots , u_{n_1+n_2},\mu_1,\dots ,\mu_{n_1},\theta_1,\dots ,\theta_{n_2})\\
&\mapsto ((-1)^{\mu_1}e^{u_1},\dots , (-1)^{\mu_{n_1}}e^{u_{n_1}}, e^{2\pi i\theta_1}e^{u_{n_1+1}},\dots , e^{2\pi i\theta_{n_2}}e^{u_{n_1+n_2}}).
\end{aligned}\]
Since the units are the elements $z\in\mathcal{O}$ with $\mathcal{N}(z)=\pm 1$, for a unit written as $z=e^{\boldsymbol{u}}\boldsymbol{v}$, where $\boldsymbol{u}\in\mathbb{R}^{n_1+n_2}$, it satisfies that $\sum_{j=1}^{n_1+n_2}u_j=0$, and hence $\mathbb{R}^{n_1+n_2-1}$ is enough for the presentation of units.

The oracle function defined in~\cite{eisentrager2014quantum} is a composition of two mappings:
\[\funcUnit:\hat{G}\xrightarrow{f_c}\{ E\text{-ideals}\}\xrightarrow{f_q}\{\text{quantum states}\},\]
where $f_q$ encodes a lattice $L$ into a quantum state $\ket{L}$ so that it provides a canonical representation for lattices, and $f_c$ map an element of $\hat{G}$ to the principal ideal generated by it.

\subsection{The $S$-unit group computation algorithm}

\begin{theorem}[{\cite[Theorem 1.1]{biasse2016efficient}}]
\label{thmBFSU}
    There is a quantum algorithm for computing the $S$-unit group of a number field $K$ in compact representation which runs in polynomial time in the parameters $n=\deg(K),\log(|\Delta|),|S|$ and $\max_{\mathfrak{p}\in S}\{\log(\mathcal{N}(\mathfrak{p}))\}$, where $\Delta$ is the discriminant of the ring of integers of $K$.
\end{theorem}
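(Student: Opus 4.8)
The plan is to follow Biasse and Song~\cite{biasse2016efficient}: reduce the $S$-unit group computation to an instance of the continuous hidden subgroup problem (CHSP) over $\mathbb{R}^{m+k}$ (up to a compact factor) and then invoke the CHSP solver of~\cite{eisentrager2014quantum}, in the quantitative packaging of~\cite{de2020quantum}, to recover the hidden lattice in polynomial time. By Dirichlet's $S$-unit theorem $U(S)\cong\mu_K\times\mathbb{Z}^{m+k}$ with $k=|S|$; sending $\alpha\in U(S)$ to its logarithmic embedding at the infinite places together with the valuations $(v_{\mathfrak{p}_1}(\alpha),\dots,v_{\mathfrak{p}_k}(\alpha))$ realises $U(S)/\mu_K$ as a full-rank lattice $\Lambda_S$ inside an $(m+k)$-dimensional hyperplane (the product formula removes one degree of freedom). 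The algorithm must output a basis of $\Lambda_S$ together with compact (power-product) representations of $S$-units lifting it, plus the torsion part, so the task splits into: (a) exhibit an oracle whose hidden period lattice is exactly $\Lambda_S$; (b) check the oracle satisfies the CHSP hypotheses with all parameters polynomial in $n,\log|\Delta|,k,\max_{\mathfrak{p}\in S}\log\mathcal{N}(\mathfrak{p})$; (c) lift the approximate basis returned by the solver to $K^{*}$ in compact form.

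For (a) I would extend the unit-group oracle $\funcUnit=f_q\circ f_c$ of~\cite{eisentrager2014quantum} to an oracle $\funcSunit$ on the enlarged group $\hat{G}_S=\mathbb{R}^{m+k}\times\mathbb{Z}_2^{n_1}\times(\mathbb{R}/\mathbb{Z})^{n_2}$, the extra $\mathbb{R}^{k}$ recording ``fractional valuations'' at $\mathfrak{p}_1,\dots,\mathfrak{p}_k$. On it, $f_c$ sends the infinite part $(\boldsymbol u,\boldsymbol\mu,\boldsymbol\theta)$ to the $E$-ideal $\varphi(\boldsymbol u,\boldsymbol\mu,\boldsymbol\theta)\,\underline{\mathcal{O}}$ and additionally rescales by a canonical continuous family of lattices attached to the $S$-primes, chosen so that integer values $\boldsymbol a\in\mathbb{Z}^{k}$ of the new coordinates reproduce exactly the fractional-ideal lattices $\varphi(\boldsymbol u,\boldsymbol\mu,\boldsymbol\theta)\cdot\underline{\prod_j\mathfrak{p}_j^{-a_j}}$ (computed by ideal arithmetic followed by Hermite normal form, Theorem~\ref{thmHNF}, with Theorem~\ref{thmEDT} supplying the elementary-divisor normal form), after which $f_q$ encodes the resulting lattice as a canonical quantum state. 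The periodicity claim $\funcSunit(g+h)=\funcSunit(g)$ for $h$ in the image of $U(S)$ is then exactly identity~\eqref{eqSU} in the forward direction, and in the reverse direction it holds because equality of the encoded lattices forces the underlying element of $K^{*}$ to have principal ideal supported only on $S$, the torsion ambiguity being absorbed into the $\mathbb{Z}_2^{n_1}\times(\mathbb{R}/\mathbb{Z})^{n_2}$ coordinates.

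For (b) I would verify $\funcSunit$ is Lipschitz and strongly periodic in the sense required by~\cite{de2020quantum}, bounding the Lipschitz constant, the covolume and first minimum of $\Lambda_S$ (i.e.\ the $S$-regulator), and the working precision by polynomials in the stated parameters; here the only genuinely new input, beyond the unit-group analysis reused from~\cite{eisentrager2014quantum, barbulescu2023special}, is the size of the HNF data of the ideals $\prod_j\mathfrak{p}_j^{a_j}$, whose entries are controlled by $\mathcal{N}(\mathfrak{p}_j)$. Running the CHSP solver then returns an approximate generating set of $\Lambda_S$, and for (c) each returned vector lies near a genuine lattice point, so by rounding inside $\Lambda_S$ one recovers an actual $S$-unit and stores it in compact representation just as the unit-group algorithm does; the finite structure of $U(S)$ (including $\mu_K$) is then read off from relations among the lifted generators.

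The step I expect to be hardest is making $\funcSunit$ genuinely continuous with a controlled Lipschitz constant across the $k$ new directions: as a new coordinate crosses an integer the output lattice is multiplied by $\mathfrak{p}_j^{-1}$, which is not a scaling of $\underline{\mathcal{O}}$ (and multiplication by a non-principal ideal is not even a linear map on lattices), so one must commit to a specific canonical and efficiently computable path in the space of lattices between consecutive integer values and then bound how the Lipschitz constant and the reversible-arithmetic precision degrade with $\mathcal{N}(\mathfrak{p}_j)$ --- keeping this polynomial is all that Theorem~\ref{thmBFSU} asks for, and pinning down the exact dependence (the $m^{4}\sum_j\log\mathcal{N}(\mathfrak{p}_j)$ term) is precisely what Theorem~\ref{thm1} does. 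A secondary technical point is that $S$-units can have doubly-exponential height, so every step of (c), and the final qubit accounting, must be routed through short/compact representations rather than explicit coordinates; this is why the quantitative CHSP framework of~\cite{de2020quantum} is used in place of the purely asymptotic statement of~\cite{eisentrager2014quantum}.
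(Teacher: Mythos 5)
Your overall architecture matches the construction this paper imports from~\cite{biasse2016efficient}: extend the unit-group oracle of~\cite{eisentrager2014quantum} by coordinates recording the valuations at the primes of $S$, show the resulting function hides exactly (the image of) $U(S)$, verify the quantitative HSP-oracle conditions, and lift the recovered lattice basis to compact representations classically. The periodicity argument you give is Proposition~\ref{propperiodic}, and your parameter accounting is the content of Theorem~\ref{thm1}.

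There is, however, one concrete gap, precisely at the step you flag as hardest. You place the valuation coordinates in $\mathbb{R}^k$ from the outset and ask for a \emph{continuous family of lattices} interpolating between $\varphi(\boldsymbol y)\underline{\mathcal{O}}\prod_j\mathfrak{p}_j^{-a_j}$ and its neighbour with $a_j$ replaced by $a_j+1$, leaving the choice of path unresolved. This is not how the construction goes, and it is doubtful it can be made to work as stated: a generic path of lattices between $\Lambda$ and $\mathfrak{p}_j^{-1}\Lambda$ passes through lattices that are no longer $E$-ideals, and, worse, the oracle must remain injective on $G/U(S)$, which a lattice-level interpolation does not guarantee (distinct points of the enlarged domain can hit the same intermediate lattice, destroying the hiding property). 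The actual oracle $\funcSunit$ is defined on $G=\hat{G}\times\mathbb{Z}^{|S|}$ with the valuation coordinates kept \emph{discrete}, so $f_c^{\prime}$ only ever evaluates honest fractional ideals; the passage to a CHSP instance on $\mathbb{R}^{\hat n}$ with $\hat n=2(n_1+n_2)+|S|-1$ is then the generic reduction of~\cite[Theorem~5.4]{biasse2016efficient} (inherited from the reduction toolbox of~\cite{eisentrager2014quantum}), in which the interpolation across an integer step happens at the level of the output \emph{quantum states} (a trigonometric blend of $\ket{f(a)}$ and $\ket{f(a+1)}$), which is trivially Lipschitz and preserves injectivity on cosets. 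Rerouting your argument through that reduction removes the need for any canonical path in the space of lattices; the only genuinely new quantitative input is then the one you correctly identify, namely bounding the distance between the ideals $\prod_j\mathfrak{p}_j^{-a_j}$ and $\prod_j\mathfrak{p}_j^{-a_j-w_j}$ in terms of $\sum_j|w_j|\log\mathcal{N}(\mathfrak{p}_j)$, which is what Lemmas~\ref{lem1} and~\ref{lem2} supply. A second, minor, imprecision: the hidden subgroup lives in $G$ itself (rank $m+|S|$ plus the compact torsion factors), not in a hyperplane of $\mathbb{R}^{m+k+1}$ cut out by the product formula; the product formula is already accounted for by using only $n_1+n_2-1$ real logarithmic coordinates in $\hat{G}$.
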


One of the contributions by~\cite{biasse2016efficient} is showing how to get an exact compact representation of the desired field element, which is processed classically.
Notice that similar to the unit group, the $S$-unit group can be identified as a subgroup of $G=\hat{G}\times\mathbb{Z}^{|S|}=\mathbb{R}^{n_1+n_2-1}\times\mathbb{Z}_2^{n_1}\times(\mathbb{R}/\mathbb{Z})^{n_2}\times\mathbb{Z}^{|S|}$, where $\mathbb{Z}^{|S|}$ corresponds the valuations (exponents) of the prime ideals in $S$.
The algorithm for Theorem \ref{thmBFSU} applies the CHSP framework from \cite{eisentrager2014quantum} with an HSP oracle
\[\funcSunit :G\xrightarrow{f_c^{\prime}} \{ E\text{-ideals}\}\xrightarrow{f_q}\{\text{quantum states}\},\]
where $f_c^{\prime}$ is defined as
\[f_c^{\prime}(\boldsymbol{y}, v_1,\dots , v_{|S|})=\varphi(\boldsymbol{y})\cdot\underline{\mathcal{O}}\cdot\mathfrak{p}_1^{-v_1}\cdots\mathfrak{p}_{|S|}^{-v_{|S|}},\]
and $f_q$ is defined as the one in~\cite{eisentrager2014quantum}, which can be extended from $E$-integral ideals to $E$-fractional ideals.
By the property of $S$-units, $\funcSunit$ hides the subgroup of $G$, denoted by $U(S)$, identified as the $S$-unit group.
The periodicity of $f_c^{\prime}$ on $U(S)$ is proved by Proposition 5.1 in~\cite{biasse2016efficient}.
We rephrase it as follows.

\begin{prop}[{\cite[Proposition 5.1]{biasse2016efficient}}]
    \label{propperiodic}
    For any $(y, (v_j))$ and $(y^\prime, (v_j^\prime))$, let $(u, (w_j))=(y^\prime, (v_j^\prime))-(y, (v_j))$.
    Then the function $f_c^{\prime}$ satisfies that
    \[f_c^{\prime}(y^\prime, (v_j^\prime))=f_c^{\prime}(y, (v_j))\Leftrightarrow\varphi(u)\in U(S).\]
    In particular, $v_{\mathfrak{p}_j}(\varphi(u))=w_j, \forall j=1,\dots , |S|$ if $\varphi(u)\in U(S)$.
\end{prop}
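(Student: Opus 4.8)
The plan is to reduce the biconditional to an identity between fractional ideals of $\mathcal{O}$, exploiting that $\varphi$ is a group homomorphism and that multiplication by an element of $E^{\times}$ is invertible on $E$-lattices.

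First I would note that $\varphi\colon\hat G\to E^{\times}$ (with componentwise multiplication on $E$) is a group homomorphism: under addition in $\hat G$ the real coordinates $u_i$ add, so the $e^{u_i}$ multiply; the $\mathbb{Z}_2$-coordinates add, so the signs $(-1)^{\mu_i}$ multiply; and the $\mathbb{R}/\mathbb{Z}$-coordinates add, so the $e^{2\pi i\theta_i}$ multiply. Writing $u$ for the $\hat G$-part of $(y^\prime,(v_j^\prime))-(y,(v_j))$, this gives $\varphi(y^\prime)=\varphi(y)\,\varphi(u)$. Substituting into the definition of $f_c^\prime$ and cancelling the invertible scalar $\varphi(y)\in E^{\times}$ (the map $\Lambda\mapsto\varphi(y)\Lambda$ on $E$-lattices is a bijection with inverse $\Lambda\mapsto\varphi(-y)\Lambda$, and it preserves $E$-ideals since $E$ is commutative), the equation $f_c^\prime(y^\prime,(v_j^\prime))=f_c^\prime(y,(v_j))$ becomes
\[ \varphi(u)\cdot\underline{\mathcal O}\cdot\mathfrak p_1^{-w_1}\cdots\mathfrak p_k^{-w_k}=\underline{\mathcal O}, \]
equivalently $\varphi(u)\,\underline{\mathcal O}=\underline{\mathfrak p_1^{w_1}\cdots\mathfrak p_k^{w_k}}$ after multiplying through by $\underline{\mathfrak p_1^{w_1}\cdots\mathfrak p_k^{w_k}}$ and using that the fractional ideals of the Dedekind ring $\mathcal O$ form a group. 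This identity is what I analyse.

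For $(\Rightarrow)$: test the last identity on $\underline 1\in\underline{\mathcal O}$ (the image of $1\in\mathcal O$). Then $\varphi(u)=\varphi(u)\cdot\underline 1\in\varphi(u)\,\underline{\mathcal O}=\underline I$, where $I=\mathfrak p_1^{w_1}\cdots\mathfrak p_k^{w_k}$ is a fractional ideal of $\mathcal O$; since $\underline I\subseteq\underline K$ this exhibits $\varphi(u)=\underline\alpha$ for a unique nonzero $\alpha\in K$. Because the embedding $K\hookrightarrow E$ is an injective ring homomorphism carrying fractional ideals to their images, $\underline\alpha\,\underline{\mathcal O}=\underline I$ forces $\alpha\mathcal O=\mathfrak p_1^{w_1}\cdots\mathfrak p_k^{w_k}$; hence $\alpha$ is an $S$-unit, i.e.\ $\varphi(u)\in U(S)$, and unique factorisation into prime ideals in $\mathcal O$ gives $v_{\mathfrak p_j}(\alpha)=w_j$ for every $j$, which is the ``in particular'' clause. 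For $(\Leftarrow)$: unwinding the identification of the $S$-unit group with a subgroup of $G=\hat G\times\mathbb Z^{|S|}$, the hypothesis $\varphi(u)\in U(S)$ says $\varphi(u)=\underline\alpha$ for an $S$-unit $\alpha$ with $v_{\mathfrak p_j}(\alpha)=w_j$; then \eqref{eqSU} reads $\alpha\cdot\mathcal O\cdot\mathfrak p_1^{-v_{\mathfrak p_1}(\alpha)}\cdots\mathfrak p_k^{-v_{\mathfrak p_k}(\alpha)}=\mathcal O$, and passing to $E$ and replacing $v_{\mathfrak p_j}(\alpha)$ by $w_j$ is precisely the displayed lattice identity, so $f_c^\prime(y^\prime,(v_j^\prime))=f_c^\prime(y,(v_j))$.

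The homomorphism check and the cancellations are routine; the one conceptual step is the descent from $E$ back to $K$ in $(\Rightarrow)$ — that a multiplier $\varphi(u)\in E^{\times}$ with $\varphi(u)\,\underline{\mathcal O}$ equal to the $E$-image of an honest fractional ideal must itself lie in $\underline K$ — which the test against $\underline 1$ makes immediate, so I do not expect a serious obstacle. What requires care is bookkeeping rather than mathematics: being precise that ``$\varphi(u)\in U(S)$'' refers to membership of the pair $(u,(w_j))$ in the copy of the $S$-unit group inside $G$ (so that the valuation-matching $w_j=v_{\mathfrak p_j}(\alpha)$ is part of the hypothesis in $(\Leftarrow)$ and part of the conclusion in $(\Rightarrow)$), that ``$v_{\mathfrak p_j}(\varphi(u))$'' means $v_{\mathfrak p_j}(\alpha)$ for the $\alpha$ with $\underline\alpha=\varphi(u)$, and that the $E$-ideal arithmetic used (negative prime powers, inverting a fractional $E$-ideal, cancelling the scalar $\varphi(y)$) is legitimate in the $E$-ideal formalism of \cite{biasse2016efficient}.
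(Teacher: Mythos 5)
This proposition is quoted (in rephrased form) from \cite[Proposition 5.1]{biasse2016efficient} and used as a black box; the paper contains no proof of it, so there is nothing in-paper to compare against. Your argument is correct and is the natural one: use that $\varphi$ is a homomorphism to reduce the equality of lattices to $\varphi(u)\,\underline{\mathcal{O}}=\underline{\prod_j\mathfrak{p}_j^{w_j}}$ by cancelling the invertible scalar $\varphi(y)$ and the common prime-ideal factors, then descend from $E$ back to $K$ by evaluating at $1$ and invoking unique factorisation of fractional ideals. You also correctly isolate the only delicate point: ``$\varphi(u)\in U(S)$'' must be read as membership of the full pair $(u,(w_j))$ in the copy of the $S$-unit group inside $G=\hat{G}\times\mathbb{Z}^{|S|}$ (otherwise the $\Leftarrow$ direction fails already for $u=0$ and some $w_j\neq 0$), which matches how the paper identifies $U(S)$ as a subgroup of $G$ and makes the ``in particular'' clause consistent.
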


\paragraph{Distances.}

In Section A.3 of~\cite{eisentrager2014quantum}, the distance between two lattices is defined as the geodesic distance on the group $\operatorname{GL}_n(\mathbb{R})$ between their bases matrices $B$ and $B^\prime$.
Here, to specify the distance by lattice but not its bases, we modify the definition as stated below.

\begin{definition}
    \label{defDistg}
    \[\dist_g(L, L^\prime)\coloneqq\inf\{\Vert A\Vert_2: e^AB_{L^\prime}=B_L, B_L\text{ and }B_{L^\prime}\text{ are bases for }L\text{ and }L^\prime\text{, respectively}\},\]
    where $\Vert (a_{jk})\Vert_2=\sqrt{\sum_{j, k}|a_{jk}|^2}$.
\end{definition}

On the other hand, the definition of the distance for the elements in the domain in~\cite{biasse2016efficient} and for the lattices are defined as follows.

\begin{definition}[{\cite[Definition 5.2]{biasse2016efficient}}]
    Let $(z, (v_j)_{j\leq |S|})$ and $(z^\prime, (v^\prime_j)_{j\leq |S|})$, we define their distance in $G/U(S)$, $\dist_{G/U(S)}((z, (v_j)), (z^\prime, (v^\prime_j)))$, by
$$\inf\{\Vert a\Vert +\sum_j|w_j|e_j\log(p_j)\text{ such that }(z^\prime, (v^\prime_j)_{j\leq |S|})=(z, (v_j)_{j\leq |S|})+(a, (w_j))+u, u\in U(S)\},$$
where $\Vert a\Vert$ is the Euclidean norm of the vector corresponding to $a$ in $\mathbb{R}^{n_1+n_2}\times\mathbb{Z}_2^{n_1}\times(\mathbb{R}/\mathbb{Z})^{n_2}$.
The $p_j, e_j$ are defined as $\mathcal{N}(\mathfrak{p}_j)=p_j^{e_j}$.
\end{definition}

\begin{definition}[{\cite[Definition 5.3]{biasse2016efficient}}]
    $$\dist(L, L^\prime)=\inf\left\{\Vert a\Vert+\sum_j\log(d_j)+n\log(d)\text{ such that }L_\Delta=e^{\diag(a_j)}B_\omega\diag(d_j/d)\right\},$$
where $L_\Delta$ runs over all the matrices of a basis of $L^\prime/L$ such that there is a matrix $B_\omega$ of an integral basis of $\mathcal{O}$, $d_j, d\in\mathbb{Z}_{>0}$, and $\Vert a\Vert$ is the Euclidean norm of the vector $a\in \mathbb{R}^{n_1+n_2}\times\mathbb{Z}_2^{n_1}\times(\mathbb{R}/\mathbb{Z})^{n_2}$ corresponding to $(a_j)_{j\leq n}\in E$ satisfying $L_\Delta=e^{\diag(a_j)}B_\omega\diag(d_j/d)$.
\end{definition}

The definition for $L^\prime/L$ is not explicitly stated in~\cite{biasse2016efficient}, but it can be realised from the statements and proofs in the paper as $L^\prime/L=\varphi(z^\prime-z)\underline{\mathcal{O}}\prod\mathfrak{p}_j^{-(v_j^\prime-v_j)}$.

\subsection{Complexity of the CHSP framework}

The complexity of the CHSP framework from~\cite{eisentrager2014quantum} is studied in~\cite{de2020quantum, barbulescu2023special}.
An HSP oracle hiding $L$ on $\mathbb{R}^m$ for some positive integer $m$ is defined as follows.

\begin{definition}[{\cite[Definition 1.1]{eisentrager2014quantum}}]
\label{defHSP}
    A function $f:\mathbb{R}^m\rightarrow H$, where $H$ is the set of unit vectors in some Hilbert space, is said to be an $(a, r,\epsilon)$-HSP oracle of the full-rank lattice $L\subset\mathbb{R}^m$ if
    \begin{enumerate}
        \item $f$ is periodic on $L$;
        \item $f$ is $a$-Lipschitz;
        \item For all $x, y\in\mathbb{R}$ such that $\dist_{\mathbb{R}^m/L}(x, y)\geq r$, it holds that $|\langle f(x)\ket{f(y)}|\leq\epsilon$.
    \end{enumerate}
\end{definition}

Theorem~\ref{thmBP23} from~\cite{barbulescu2023special} calculates the space complexity for the unit group algorithm from~\cite{eisentrager2014quantum}.
According to~\cite[Lemma 21, Lemma 34 and Corollary 37]{barbulescu2023special}, one can obtain that $O(\log(1/\lambda_1^*))=O(m+\frac{1}{m}\log D)$, where $\lambda_1^*$ is the first successive minima of the dual lattice of $L$, and the following corollary on the complexity of the CHSP framework applied on an HSP oracle $f$.

\begin{corollary}[\cite{barbulescu2023special}]
\label{corspcpxty}
    Given access to an HSP oracle $f$, the CHSP algorithm uses
    \[O(m^3\log\Lip(f)+m^4\log\Delta)+O(m\log\tau^{-1})\]
    qubits.
\end{corollary}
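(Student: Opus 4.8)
The plan is to assemble the statement from the quantitative analysis of the CHSP algorithm already carried out in \cite{de2020quantum} and \cite{barbulescu2023special}; the only real work is to combine \cite[Lemma 34]{barbulescu2023special} (the qubit count of the CHSP algorithm run on an abstract HSP oracle) with \cite[Lemma 21]{barbulescu2023special} (the number-field estimate for $\lambda_1^*$) and then to simplify. First I would record the shape of that qubit count: run on an $(a,r,\epsilon)$-HSP oracle $f$ of a full-rank lattice $L\subset\mathbb{R}^m$, the CHSP algorithm uses a number of qubits polynomial in $m$, $\log a=\log\Lip(f)$, $\log(1/r)$, $\log(1/\epsilon)$, and $\log(1/\lambda_1^*)$, where $\lambda_1^*$ is the first minimum of the dual lattice $L^*$. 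The dominant pieces are an input-grid register, of $O(m)$ coordinates at precision governed by $\log\Lip(f)$, $\log(1/r)$, and $\log(1/\lambda_1^*)$; an oracle-output register, whose width is a polynomial-in-$m$ factor times the precision needed to encode the output lattices faithfully (for the number-field oracles $\funcUnit,\funcSunit$ this lattice is an ideal of $\mathcal{O}$ up to a polynomially bounded scaling, so that precision is $O(\log|\Delta|)$); and a polynomial-in-$m$ number of parallel copies/iterations needed to recover a full-rank sublattice of $L^*$. Multiplying these out, after fixing $\epsilon$ as a function of $m$ and $\lambda_1^*$, yields a bound of the form $O\bigl(m^3\log\Lip(f)+m^4\log|\Delta|+m^3\log(1/\lambda_1^*)+m^3\log(1/r)\bigr)+O(m\log\tau^{-1})$, the last summand recording that the overall output error $\tau$ enters only through a final precision step, exactly as in \cite[Corollary 37]{barbulescu2023special}.

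Second, I would eliminate the remaining lattice quantities. The separation radius $r$ can be taken to depend only on $m$ and $\lambda_1^*$, so $\log(1/r)$ is absorbed into $\log(1/\lambda_1^*)$; and by the estimate noted in the excerpt, $\log(1/\lambda_1^*)=O(m+\tfrac1m\log D)$ with $D$ polynomially bounded in $|\Delta|$ (this is \cite[Lemma 21]{barbulescu2023special}, using that the covolume of $L$ is controlled by $\sqrt{|\Delta|}$ together with Minkowski-type bounds on its successive minima), the term $m^3\log(1/\lambda_1^*)$ becomes $O(m^4+m^2\log|\Delta|)$. Collecting everything and using Minkowski's discriminant inequality $\log|\Delta|=\Omega(m)$ to absorb the purely polynomial-in-$m$ residue (the $m^4$, and in the specialisation to the unit oracle also an $m^5$ hiding inside $\log\Lip$) into $m^4\log|\Delta|$ then gives the claimed bound.

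The step I expect to be the main obstacle is the power-counting bookkeeping of the first paragraph: pinning down, from the de Boer--Ducas--Fehr construction, the exact exponent of $m$ multiplying each of $\log\Lip(f)$, $\log(1/\lambda_1^*)$, $\log(1/r)$, and the output-encoding precision, so that after the substitutions precisely these three terms survive rather than a messier polynomial. A secondary and more routine point is to check that $r$ and $\epsilon$ really can be chosen as functions of $m$ and $\lambda_1^*$ alone, and that the passage from the oracle's inner-product error $\epsilon$ to the algorithm's output error $\tau$ does not silently reintroduce a dependence on $\Delta$ or on $\Lip(f)$ inside the additive $O(m\log\tau^{-1})$ term.
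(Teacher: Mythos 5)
The paper does not actually prove this corollary itself---it is imported wholesale from \cite{barbulescu2023special} (Lemmas 21 and 34 and Corollary 37) together with \cite[Theorem 2]{de2020quantum}, exactly the ingredients you combine: the abstract qubit count of the dual lattice sampler in terms of $m$, $\Lip(f)$, $r$, $\epsilon$, $\lambda_1^*$, followed by the number-field estimate $\log(1/\lambda_1^*)=O(m+\frac{1}{m}\log D)$ and absorption of the residual polynomial-in-$m$ terms. Your reconstruction therefore follows the same route the paper points to; the power-counting details you flag as the main risk are likewise not verified in this paper but deferred entirely to the cited sources.
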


In~\cite{de2020quantum}, the algorithm of~\cite{eisentrager2014quantum} is rigorously analysed (and modified) as follows. The oracle function is considered on a restricted domain $\mathbb{D}^m\coloneqq\left(\frac{1}{q}\mathbb{Z}^m\right)/\mathbb{Z}^m$ with the parameter $q$ relating to the space complexity, and $H$ is a subset of a Hilbert space of dimension $2^n$.
The input state is a Gaussian superposition over the representatives $x\in\mathbb{D}^m_{\text{rep}}\coloneqq\frac{1}{q}\mathbb{Z}^m\cap [-\frac{1}{2},\frac{1}{2})^m$ of $\mathbb{D}^m$ with the parameter $s\in\mathbb{R}$.
The algorithm has oracle access to $f$ which maps $\ket{x}\ket{0}$ to $\ket{x}\ket{f(Vx)}$ with the parameter $V\in\mathbb{R}$.
The first register uses $m\log q$ qubits, and the second register uses $N$ qubits.
Reference~\cite{de2020quantum} denotes that $\log q\eqqcolon Q$ and gives the number of qubits needed for the oracle as follows.

\begin{theorem}[{\cite[Theorem 2]{de2020quantum}}]
   There exists dual lattice sampler quantum algorithm with the error parameter $\eta >0$ and the relative distance parameter $1/2>\delta >0$ which uses one quantum oracle call to $f$, $Qm+N$ qubits, where
   \begin{equation}
   \label{eqcpxty}
       Q=O\left(m\log\left(m\log\frac{1}{\eta}\right)\right)+O\left(\log\left(\frac{\Lip(f)}{\eta\delta\lambda_1^*}\right)\right).
   \end{equation}
\end{theorem}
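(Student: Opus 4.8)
I would reconstruct the proof of \cite[Theorem 2]{de2020quantum} as a discretised, truncated realisation of the idealised continuous procedure of \cite{eisentrager2014quantum} and then account for how fine the discretisation must be. The plan is to first write out the circuit explicitly. Step (i): prepare, in the first register of $m\log q=mQ$ qubits, the width-$s$ Gaussian superposition $\sum_{x\in\mathbb{D}^m_{\text{rep}}}\rho_s(x)\ket{x}$ over the representatives $\tfrac1q\mathbb{Z}^m\cap[-\tfrac12,\tfrac12)^m$. Step (ii): apply the oracle once to get $\sum_x\rho_s(x)\ket{x}\ket{f(Vx)}$, where the second, $N$-qubit register holds $\ket{f(Vx)}$. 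Step (iii): trace out the second register. Step (iv): apply the QFT on $\mathbb{D}^m$ to the first register. Step (v): measure the first register and output the measured point, rescaled, as a candidate dual vector. The qubit count $Qm+N$ is then immediate from the two register sizes, so the entire content of the theorem is the choice of $q$ — equivalently $Q=\log q$ — together with the auxiliary scalars $s$ and $V$, that makes this output, up to statistical distance $\eta$, a sample from a discrete Gaussian supported on the dual lattice whose sharpness is governed by $\delta$ and $\lambda_1^*$.

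Next I would analyse the circuit stage by stage, reading off a lower bound on $q$ from each. From step (i): since $f$ periodic on $L$ makes $x\mapsto f(Vx)$ periodic on $\tfrac1V L$, the QFT in step (iv) should recover that lattice's dual; for the discrete, $[-\tfrac12,\tfrac12)^m$-truncated Gaussian to behave like an ideal continuous one under this QFT, a Poisson-summation / theta-function estimate over the rank-$m$ grid controls the error, and requiring it to be at most $\eta$ forces $\log q=\Omega\!\left(m\log\!\left(m\log\tfrac1\eta\right)\right)$ — the first term of \eqref{eqcpxty}. From steps (ii)--(iii): because $f$ is only $\Lip(f)$-Lipschitz, evaluating it on the $\tfrac1q$-grid rather than exactly incurs a trace-distance error of order $\Lip(f)\cdot V/q$ per coset (the Lipschitz bound applied to the rounding of a basis of $\tfrac1V L$ onto the grid), while $V$ must be taken of order $1/(\delta\lambda_1^*)$, up to $\mathrm{poly}(m)$, for the dual-lattice structure to be visible on $\mathbb{D}^m$ at relative resolution $\delta$; requiring $\Lip(f)V/q\le\eta$ then forces $\log q=\Omega\!\left(\log\tfrac{\Lip(f)}{\eta\delta\lambda_1^*}\right)$, the second term. (The separate requirement that $1/q$ be fine enough to resolve points of $L^*$ yields only $\log q=\Omega(\log\tfrac1{\delta\lambda_1^*})$, which is subsumed in the second term.) Taking $q$ to be the maximum of these gives the stated bound on $Q$, and with $V,s$ fixed accordingly a final Poisson-summation argument on the coset superpositions shows the measurement in step (v) indeed returns a dual-lattice sample within statistical distance $\eta$.

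The main obstacle is the error bookkeeping in the middle stage: one must show that replacing the continuous, exactly $L$-periodic oracle of \cite{eisentrager2014quantum} by the grid-restricted, merely Lipschitz oracle $x\mapsto f(Vx)$ on $\mathbb{D}^m$ degrades the reduced first-register state (after tracing out) by only $O(\eta)$ in trace distance, \emph{uniformly over the exponentially many cosets} of $\tfrac1V L$, and then that the QFT — an isometry, hence non-expanding in trace distance — carries this bound unchanged to the final measurement statistics. This needs a coset-by-coset Lipschitz estimate combined into a Gaussian-weighted averaging argument, together with smoothing-parameter control on $L$ and $L^*$; the latter is where an estimate of the type $O(\log(1/\lambda_1^*))=O(m+\tfrac1m\log D)$ recalled before Corollary~\ref{corspcpxty} would enter. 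Everything else — the register accounting and the one-dimensional Gaussian tail bounds — is routine once the discretisation parameters $q,s,V$ are in place.
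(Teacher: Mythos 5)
First, a point of comparison: the paper does not prove this statement at all --- it is imported verbatim, with citation, as Theorem~2 of de Boer, Ducas and Fehr, and is used downstream only as a black box (via Corollary~\ref{corspcpxty}). So there is no in-paper proof to measure your reconstruction against; the comparison below is with the cited source.

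Your circuit (Gaussian superposition over $\mathbb{D}^m_{\text{rep}}$ on $mQ$ qubits, one oracle call writing $\ket{f(Vx)}$ into an $N$-qubit register, QFT, measurement) and the resulting $Qm+N$ accounting are exactly the construction being cited, and your attribution of the two terms of \eqref{eqcpxty} --- truncation and discretisation of the Gaussian for the first, the Lipschitz error at grid spacing $V/q$ together with the scaling $V\approx 1/(\delta\lambda_1^*)$ for the second --- is the right decomposition. But as written this is a plan rather than a proof, and the quantitative core is asserted rather than derived. (a) For the first term you say a Poisson-summation estimate ``forces'' $\log q=\Omega\left(m\log\left(m\log\frac{1}{\eta}\right)\right)$, but nothing in your sketch accounts for the factor $m$ \emph{outside} the logarithm; a naive tail or theta-function bound on an $m$-dimensional truncated Gaussian would yield only something like $\log q=\Omega(\log(m/\eta))$, so this factor --- which is what makes the final qubit count quintic rather than quartic in $m$ --- is precisely the delicate part and cannot be waved at. (b) For the middle stage, you yourself flag the uniform-over-cosets trace-distance bound as ``the main obstacle'' and then do not carry it out; in particular the claim that the per-coset error is $O(\Lip(f)\cdot V/q)$ requires the Gaussian-weighted averaging over the cosets of $\frac{1}{V}L$ and the smoothing-parameter control relating $\eta$, $s$ and $q$, which is where the hypotheses of Definition~\ref{defHSP} actually enter. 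Until (a) and (b) are executed, the specific form of $Q$ in \eqref{eqcpxty} is made plausible but not established.
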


Later in~\cite[Corollary 37]{barbulescu2023special}, the number of qubits for the second register is claimed that one stores the values of $f$ on $Qm$ qubits.

\paragraph{Lipschitz constant.}

Consider the quantum encoding part, $f_q$, in the oracle functions for both the unit group algorithm and the $S$-unit group algorithm.
The Lipschitz continuity for $f_q$ is proven as stated below.

\begin{theorem}[{\cite[Theorem D.4]{eisentrager2014quantum}}]
\label{thmLip}
    $\Vert\ket{f_q(L)}-\ket{f_q(L^\prime)}\Vert\leq\Lip(f_q)\cdot\dist_g(L, L^\prime)$.
\end{theorem}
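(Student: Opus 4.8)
The plan is to prove the estimate by geodesic interpolation together with a bound on the ``speed'' of the induced curve of quantum states. Recall that $f_q$ sends a lattice to a (normalized) state obtained as a Gaussian superposition attached to the lattice, a construction that depends smoothly on a basis matrix of the lattice, and that $\dist_g(L,L')$ is, by Definition~\ref{defDistg}, the infimum of $\Vert A\Vert_2$ over pairs of bases $B_L,B_{L'}$ with $e^AB_{L'}=B_L$. So fix such bases and such an $A$, and for $t\in[0,1]$ let $L_t$ be the lattice with basis $e^{tA}B_{L'}$, so that $L_0=L'$ and $L_1=L$. Since $t\mapsto e^{tA}B_{L'}$ is a smooth curve in $\operatorname{GL}_n(\mathbb{R})$ and the lattice-to-state map underlying $f_q$ is smooth (it is built from Gaussians, hence carries no boundary effect), the map $t\mapsto\ket{f_q(L_t)}$ is a smooth curve in the unit sphere of the relevant Hilbert space.

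First I would reduce the statement to the infinitesimal bound
\[\Bigl\Vert\tfrac{d}{dt}\ket{f_q(L_t)}\Bigr\Vert\le C\,\Vert A\Vert_2\qquad(t\in[0,1])\]
for a constant $C$ independent of $A$, $L$, $L'$ and $t$; this $C$ is what we call $\Lip(f_q)$. Granting it, the fundamental theorem of calculus in Hilbert space gives
\[\Vert\ket{f_q(L)}-\ket{f_q(L')}\Vert=\Bigl\Vert\int_0^1\tfrac{d}{dt}\ket{f_q(L_t)}\,dt\Bigr\Vert\le\int_0^1\Bigl\Vert\tfrac{d}{dt}\ket{f_q(L_t)}\Bigr\Vert\,dt\le C\,\Vert A\Vert_2,\]
and taking the infimum over all admissible $A$ (and bases) yields $\Vert\ket{f_q(L)}-\ket{f_q(L')}\Vert\le C\cdot\dist_g(L,L')$, which is the claim.

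For the infinitesimal bound I would differentiate the explicit Gaussian encoding. The generator of the deformation at time $t$ is the linear map $x\mapsto Ax$ acting on $L_t$; differentiating the superposition, the derivative splits into the change of the Gaussian weight under the deformation of its covariance and the change of the reduction modulo the lattice. Each term is an integral whose integrand is a Gaussian times a polynomial factor arising from a single application of $A$, so each is bounded by $\Vert A\Vert_2$ times a quantity depending only on the fixed encoding parameters (the Gaussian width and the discretization scale); collecting these gives $C=\Lip(f_q)$. If one wishes to work with non‑normalized states at intermediate times, one applies the product rule and notes that removing the radial component only decreases the norm, so the same bound survives.

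The main obstacle is the second term above, namely controlling how ``reduce modulo $L_t$'' varies with $t$. With a sharp fundamental domain this would involve the motion of the domain boundary and be awkward; the point of the Gaussian-smoothed encoding of~\cite{eisentrager2014quantum} is precisely that it replaces a fundamental domain by a smooth partition of unity, so that the derivative is an honest integral with uniformly bounded integrand and no boundary terms appear. Once the smoothing is in place, making the dependence of $C$ on the encoding parameters explicit — which is what is actually needed for the downstream qubit count — is a matter of carefully tracking Gaussian second-moment estimates and presents no conceptual difficulty.
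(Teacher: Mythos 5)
A preliminary caveat: the paper does not prove this statement at all --- it is imported verbatim as Theorem D.4 of~\cite{eisentrager2014quantum} and used as a black box (its only role here is to feed into Theorem~\ref{thmlogLip} and Lemma~\ref{lem1}). So there is no in-paper proof to compare against; the following assesses your sketch against the argument in the cited source and on its own terms.

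Your overall strategy --- interpolate along the family $L_t$ with basis $e^{tA}B_{L^\prime}$, bound the speed $\bigl\Vert\tfrac{d}{dt}\ket{f_q(L_t)}\bigr\Vert\le C\Vert A\Vert_2$, integrate, and take the infimum over admissible $A$ --- is the right one and is essentially how the cited result is obtained. But two points in your infinitesimal step are genuine gaps rather than bookkeeping. (i) Normalization: writing $\ket{f_q(L_t)}=v(t)/\Vert v(t)\Vert$, the derivative of the normalized state is $\Vert v(t)\Vert^{-1}P_{v(t)^{\perp}}v^\prime(t)$; the orthogonal projection does only decrease the norm, as you say, but you are still left with the prefactor $\Vert v(t)\Vert^{-1}$, so you need a lower bound on the norm of the unnormalized Gaussian superposition, uniformly in $t$. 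This is exactly where the hypotheses relating the encoding parameters to the lattice geometry (Gaussian width versus $\lambda_1$, determinant, grid spacing) enter, and it is not automatic. (ii) Uniformity of $C$: the second-moment quantities controlling your integrand depend on the geometry of $L_t$, not only on ``the fixed encoding parameters,'' so you must check that the intermediate lattices $L_t$ --- which, note, are generally not $E$-ideals, although $f_q$ is defined for arbitrary lattices so this is harmless --- stay in the class for which those moments are controlled; this is plausible because $\Vert tA\Vert_2\le\Vert A\Vert_2$ bounds the distortion, but it must be argued. Finally, note that for the use made of this theorem downstream the entire content is the explicit value of $\Lip(f_q)$: if one simply defines $C$ as the supremum of the speeds, the displayed inequality is close to tautological, so the ``Gaussian second-moment estimates'' you defer as presenting ``no conceptual difficulty'' are in fact the substance of the result.
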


From this result,~\cite{eisentrager2014quantum} derived the value of $\Lip(\funcUnit )$, and the order of $\log\Lip(\funcUnit )$ is given in~\cite{barbulescu2023special}.

\begin{theorem}[{\cite[Theorem 36]{barbulescu2023special}}]
\label{thmlogLip}
    $\log_2\Lip(\funcUnit )=O(m^2+m\log\Delta)$.
\end{theorem}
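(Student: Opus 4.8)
The plan is to use the factorisation $\funcUnit=f_q\circ f_c$ and the submultiplicativity $\Lip(\funcUnit)\le\Lip(f_q)\cdot\Lip(f_c)$, so that $\log_2\Lip(\funcUnit)\le\log_2\Lip(f_q)+\log_2\Lip(f_c)$. Here the domain $\hat G$ carries its natural product metric (Euclidean on the $\mathbb{R}^{m}$ and torus factors), the space of $E$-ideals carries the metric $\dist_g$ of Definition~\ref{defDistg}, and the target carries the Hilbert-space metric. Theorem~\ref{thmLip} already gives $\Vert\ket{f_q(L)}-\ket{f_q(L')}\Vert\le\Lip(f_q)\cdot\dist_g(L,L')$, so two things remain: (i) show that $\dist_g(f_c(g),f_c(g'))$ is at most a polynomially large multiple of the $\hat G$-distance between $g$ and $g'$; and (ii) turn the expression for $\Lip(f_q)$ coming out of the proof of Theorem~\ref{thmLip} into the bound $\log_2\Lip(f_q)=O(m^2+m\log|\Delta|)$ by inserting the relevant lattice invariants.

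For (i): $f_c(g)=\varphi(g)\underline{\mathcal O}$ is a diagonal rescaling of the fixed lattice $\underline{\mathcal O}$, where $\varphi$ exponentiates the $n_1+n_2$ logarithmic coordinates (reconstructed from the $m=n_1+n_2-1$ free ones via $\sum_j u_j=0$) and turns each angular coordinate into a planar rotation block. With a fixed integral basis matrix $B_\omega$ of $\underline{\mathcal O}$ one has $\diag(\varphi(g))B_\omega=e^{A}\diag(\varphi(g'))B_\omega$ with $A=\log\bigl(\diag(\varphi(g))\diag(\varphi(g'))^{-1}\bigr)$ block-diagonal; choosing each rotation angle in $(-\pi,\pi]$, the norm $\Vert A\Vert_2$ is at most a constant ($\le 2\pi\sqrt2$) times the $\hat G$-distance, up to the $\sqrt m$ blow-up incurred by passing from the $m$ free log-coordinates to all $n_1+n_2$ of them. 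Hence $\Lip(f_c)=O(\sqrt m)$ and $\log_2\Lip(f_c)=O(\log m)$, which is absorbed. (The one delicate point, a sign flip on a real embedding --- whose diagonal matrix has determinant $-1$ and no real logarithm --- is handled exactly as in~\cite{eisentrager2014quantum}: the unit-group oracle is analysed on a bounded fundamental region where the $\mathbb{Z}_2^{n_1}$ factor is treated componentwise.)

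Step (ii) is the heart of the argument. The proof of Theorem~\ref{thmLip} expresses $\Lip(f_q)$ through the lattice invariants of the ideals actually fed to the oracle --- covolume, first and largest successive minima, and the norms of a reduced basis --- the Gaussian width and rounding precision of the encoding being tuned to these. For $L=\varphi(g)\underline{\mathcal O}$ with $g$ over the bounded fundamental domain used by the algorithm, the entries of $\varphi(g)$ stay within fixed multiplicative windows, so: $\operatorname{covol}(L)=2^{O(m)}\sqrt{|\Delta|}$; $\lambda_1(L)\ge c\sqrt n$, since $\prod_\sigma|\sigma(x)|=|\mathcal N(x)|\ge 1$ forces every nonzero $x$ to have canonical-embedding $\ell_2$-norm $\ge\sqrt n$ by AM--GM; $\lambda_n(L)\le 2^{O(m)}\sqrt{|\Delta|}$ by Minkowski's second theorem; and a reduced basis of $L$ has all vectors of norm at most $2^{O(m)}\lambda_n(L)$. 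Feeding these in, the two dominant contributions to $\log_2\Lip(f_q)$ are a term $O(m^2)$, coming from a factor of the form $\bigl(2^{O(m)}\bigr)^{O(m)}=2^{O(m^2)}$ that records the reduced-basis-versus-successive-minima gap raised to a power $\sim m$, and a term $O(m\log|\Delta|)$, coming from the covolume $\sqrt{|\Delta|}$ (equivalently $\lambda_n\lesssim\sqrt{|\Delta|}$) entering to a power $\sim m$, e.g.\ through the number of lattice points in the support of the encoding. With (i) this gives $\log_2\Lip(\funcUnit)=O(m^2+m\log\Delta)$.

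I expect the real obstacle to be step (ii): faithfully recovering the expression for $\Lip(f_q)$ from the proof of Theorem~\ref{thmLip} --- which is phrased through several auxiliary quantities (smoothing parameter, chosen Gaussian parameter, rounding granularity) rather than in closed form --- and verifying that under the bounded-fundamental-domain restriction each of these stays within the polynomial / $2^{O(m)}$ windows above, so that the exponents collapse to exactly $m^2$ and $m\log|\Delta|$. Step (i) is bookkeeping, and the final combination of the two bounds is immediate.
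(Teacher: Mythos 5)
First, a point of comparison: the paper you are working against does not prove this statement at all --- Theorem~\ref{thmlogLip} is imported verbatim as \cite[Theorem 36]{barbulescu2023special}, and the surrounding text only records that $\Lip(\funcUnit)$ was derived in \cite{eisentrager2014quantum} and its logarithm estimated in \cite{barbulescu2023special}. So there is no in-paper proof to match; the relevant benchmark is the cited source. Your decomposition $\funcUnit=f_q\circ f_c$ with $\Lip(\funcUnit)\le\Lip(f_q)\cdot\Lip(f_c)$, measuring the middle space with $\dist_g$ and invoking Theorem~\ref{thmLip} for $f_q$, is exactly the strategy of that source, and it also mirrors how the present paper treats the $S$-unit analogue (Theorem~\ref{thmLip} composed with the distance-comparison Lemmas~\ref{lem1} and~\ref{lem2}). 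Your step (i) is sound modulo the sign-flip caveat you already flag.

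The genuine gap is the one you diagnose yourself: step (ii) is asserted, not proved. You correctly name the invariants that enter ($\operatorname{covol}(L)\approx 2^{O(m)}\sqrt{|\Delta|}$, $\lambda_1\ge c\sqrt n$ via AM--GM and $|\mathcal N(x)|\ge1$, a reduced basis with $2^{O(m)}$ defect) and the two target contributions $2^{O(m^2)}$ and $|\Delta|^{O(m)}$, but you never actually extract the closed-form Lipschitz constant from the encoding underlying Theorem~\ref{thmLip} --- the choice of Gaussian parameter, the straddle/rounding granularity, and the power to which each invariant is raised. The sentence ``feeding these in, the two dominant contributions are \dots'' is precisely the computation that constitutes the theorem, and without it the claim that the exponents collapse to $m^2$ and $m\log|\Delta|$ (rather than, say, $m^2\log|\Delta|$, which the stated bound excludes) is unverified. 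As it stands the proposal is a correct and well-aimed plan whose decisive quantitative step is missing; to complete it you would need to reproduce the explicit bound on $\Lip(f_q)$ from the proof of \cite[Theorem D.4]{eisentrager2014quantum} and track the exponent of each invariant through it.
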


\section{Proof for the main theorem}

 To prove Theorem~\ref{thm1}, we need the Lipschitz constant for $\funcSunit$.
 Since the Lipschitz constants depend on the distance chosen, the approach is to calculate the Lipschitz constant for $f_c^{\prime}$ with the distance $\dist(\cdot,\cdot)$ between the ideals (Lemma \ref{lem2}), and the relation between distances $\dist(\cdot,\cdot )$ and $\dist_g(\cdot,\cdot )$ (Lemma \ref{lem1}).
 Combined with Theorem \ref{thmLip}, one can obtain an upper bound for $\log\Lip(\funcSunit )$ from the upper bound for $\log\Lip(\funcUnit )$.
 Therefore, we will use the following two lemmas shown at the end of this section.

\begin{lemma}
    \label{lem1}
    For any $E$-ideals $L$ and $L^\prime$, $\dist_g(L, L^\prime)=O(n^{2n+2}+\prod_j\mathcal{N}(\mathfrak{p}_j)^{c_jn})\cdot\dist(L, L^\prime)$ holds for some constants $c_j$.
\end{lemma}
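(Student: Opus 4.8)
The plan is to unfold both distances into statements about basis matrices and to manufacture, out of a near-optimal decomposition for $\dist$, a single change of basis whose matrix logarithm is small. By Definition~\ref{defDistg} one has $\dist_g(L,L')\le\Vert A\Vert_2$ for any $A$ with $e^{A}L'=L$, so it suffices to exhibit, for a fixed $\varepsilon>0$ and a decomposition $L_\Delta=e^{\diag(a_j)}B_\omega\diag(d_j/d)$ of a basis matrix of $L'/L$ with $\mathcal D\coloneqq\Vert a\Vert+\sum_j\log d_j+n\log d\le\dist(L,L')+\varepsilon$, one matrix $M$ with $ML'=L$, $M=e^{A}$, and $\Vert A\Vert_2$ at most the asserted factor times $\mathcal D$. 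Writing $L=\varphi(z)\,\underline{\mathcal O}\prod_j\mathfrak p_j^{-v_j}$ and $L'=\varphi(z')\,\underline{\mathcal O}\prod_j\mathfrak p_j^{-v_j'}$, set $\mathfrak c=\prod_j\mathfrak p_j^{-v_j}$ and $\mathfrak b=\prod_j\mathfrak p_j^{-(v_j'-v_j)}$, so $L=\varphi(z)\,\underline{\mathfrak c}$ and $L'=\varphi(z')\,\underline{\mathfrak b\mathfrak c}$.

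Next I would take $M=\varphi(z)\,N\,\varphi(z')^{-1}$, where $\varphi(z),\varphi(z')$ are the diagonal archimedean twists and $N$ is a linear isomorphism $\underline{\mathfrak b\mathfrak c}\to\underline{\mathfrak c}$; then $ML'=\varphi(z)\,N\,\underline{\mathfrak b\mathfrak c}=\varphi(z)\,\underline{\mathfrak c}=L$, and $\Vert A\Vert_2\le\Vert a\Vert+\Vert\log N\Vert_2$ up to lower-order terms coming from combining the diagonal and the non-diagonal move (for the optimal representative the diagonal part contributes exactly $\Vert a\Vert$ after absorbing a unit). To build $N$: clear denominators to write $\mathfrak b=\mathfrak a/d$ with $\mathfrak a\subseteq\mathcal O$ integral, $\mathcal N(\mathfrak a)=\prod_jd_j$; localizing shows the elementary divisors of $\underline{\mathfrak a\mathfrak c}\subseteq\underline{\mathfrak c}$ are again the $d_j$. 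Theorem~\ref{thmEDT} then gives a $\mathbb Z$-basis $c_1,\dots,c_n$ of $\underline{\mathfrak c}$ with $d_ic_i$ a basis of $\underline{\mathfrak a\mathfrak c}$, hence $\tfrac{d_i}{d}c_i$ a basis of $\underline{\mathfrak b\mathfrak c}=\tfrac1d\underline{\mathfrak a\mathfrak c}$, so $N=B_c\,\diag(d/d_i)\,B_c^{-1}$ with $B_c=[c_1|\cdots|c_n]$ and $\log N=B_c\,\diag(\log(d/d_i))\,B_c^{-1}$. The diagonal part is harmless: $\sum_i\lvert\log(d_i/d)\rvert\le\sum_i\log d_i+n\log d=\log\mathcal N(\mathfrak a)+n\log d=\sum_j\log d_j+n\log d\le\mathcal D$, using $\sum_i\log d_i=\log\prod_id_i=\log\mathcal N(\mathfrak a)$ and $d_i,d\ge1$. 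It remains to bound $\Vert\log N\Vert_2$, for which I would use the pointwise estimate $\Vert B_cDB_c^{-1}\Vert_2\le\sum_i\lvert D_{ii}\rvert\,\Vert c_i\Vert\,\Vert c_i^{\vee}\Vert$ ($c_i^{\vee}$ the dual basis) together with the observation that $D_{ii}=\log(d_i/d)$ vanishes on the unstretched coordinates.

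The step I expect to be the real obstacle is making $\Vert c_i\Vert\,\Vert c_i^{\vee}\Vert$ small on the stretched coordinates. The crude bound $\Vert B_cDB_c^{-1}\Vert_2\le\kappa(B_c)\Vert D\Vert_2$ would pay the aspect ratio $\lambda_n(\underline{\mathfrak c})/\lambda_1(\underline{\mathfrak c})$ of an ideal lattice, which can be of order $\sqrt{\lvert\Delta\rvert}$ and is not permitted by the statement. The remedy is to arrange that $\{c_i\}$ is \emph{simultaneously} adapted to $\underline{\mathfrak a\mathfrak c}$ as in Theorem~\ref{thmEDT} and LLL-reduced, for which $\Vert c_i\Vert\,\Vert c_i^{\vee}\Vert=\Vert c_i\Vert/\Vert c_i^{*}\Vert\le2^{O(n)}$ with no $\Delta$-dependence (the Gram–Schmidt factors cancel against those of the reduced dual basis). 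Since $\mathfrak a$ is supported on the fixed set $S$ and the rational primes beneath it, one carries out the adaptation one prime at a time through the CRT splitting $\mathcal O/\mathfrak a\cong\prod_j\mathcal O/\mathfrak p_j^{a_j}$, LLL-reduces inside each elementary-divisor block, and estimates how far one such local adaptation can skew a reduced basis — crucially, a large power $\mathfrak p_j^{a_j}$ acts as multiplication by a rational integer off a sublattice of bounded index and so contributes only to the already-controlled diagonal part, not to the skewing. Tracking this gives an accumulated factor $\operatorname{poly}(n)^{n}\cdot\prod_j\mathcal N(\mathfrak p_j)^{O(n)}$, which together with the standard $\operatorname{poly}(n)^{n}$ losses from lattice reduction yields $O(n^{2n+2}+\prod_j\mathcal N(\mathfrak p_j)^{c_jn})$. (Equivalently, since $\dist_g$ behaves like a path metric, one may instead break the displacement into the archimedean move plus one elementary step ``$\times\mathfrak p_j^{\pm1}$'' per unit of valuation and bound a single step $\underline{\mathfrak f}\to\underline{\mathfrak p_j\mathfrak f}$ by $O(\operatorname{poly}(n)\,\mathcal N(\mathfrak p_j)^{O(1)}\log\mathcal N(\mathfrak p_j))$ uniformly in the intermediate ideal lattice~$\underline{\mathfrak f}$; obtaining this per-step bound without any $\lvert\Delta\rvert$ is exactly the same delicate point.) Assembling the two legs gives $\dist_g(L,L')\le\Vert a\Vert+O\!\bigl(n^{2n+2}+\prod_j\mathcal N(\mathfrak p_j)^{c_jn}\bigr)\mathcal D$, and letting $\varepsilon\to0$ finishes the proof.
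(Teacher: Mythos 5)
Your reduction to bounding a single conjugated diagonal $\log N=B_c\,\diag(\log(d/d_i))\,B_c^{-1}$ is a reasonable way to start, but the proof as written has a genuine gap exactly where you flag it: you never establish that a $\mathbb{Z}$-basis $c_1,\dots,c_n$ of $\underline{\mathfrak c}$ can be chosen \emph{simultaneously} adapted to the elementary divisors of $\underline{\mathfrak a\mathfrak c}\subseteq\underline{\mathfrak c}$ (Theorem~\ref{thmEDT}) and with $\Vert c_i\Vert\,\Vert c_i^{\vee}\Vert\le 2^{O(n)}$. These two requirements conflict in general --- the adapted basis produced by Smith normal form can be arbitrarily skew, and re-reducing it destroys the property that $d_ic_i$ generates the sublattice --- and your proposed fix (CRT splitting prime by prime, LLL inside each elementary-divisor block, ``tracking'' the accumulated skew) is asserted rather than carried out. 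The sentence ``Tracking this gives an accumulated factor $\operatorname{poly}(n)^n\cdot\prod_j\mathcal N(\mathfrak p_j)^{O(n)}$'' is precisely the claim that needs a proof, and your alternative per-step formulation at the end is, as you yourself note, the same unproved claim in different clothing. A smaller but real soft spot: you pass from $M=\varphi(z)N\varphi(z')^{-1}=e^A$ to $\Vert A\Vert_2\le\Vert a\Vert+\Vert\log N\Vert_2$ ``up to lower-order terms,'' but the factors do not commute and $\dist_g$ as defined in Definition~\ref{defDistg} is an infimum over single matrices $A$ with $e^AB_{L'}=B_L$, not a path length, so this additivity also needs justification.

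For comparison, the paper's proof avoids the conjugation problem entirely by never diagonalizing inside the ideal lattice. It splits into two cases. When $d_j=d=1$ it takes an HNF basis $H$ of $L'$ relative to $L$, sets $e^A=He^{\diag((-z)_j)}$, and bounds $\Vert A\Vert_2\le n^2\det(H)\cdot\dist(L,L')\le n^2\prod_j\mathcal N(\mathfrak p_j)^{c_j}\cdot\dist(L,L')$ using Theorem~\ref{thmHNF} (the determinant of the HNF equals the index). When some $d_j$ or $d$ exceeds $1$ it observes $\dist(L,L')\ge\log 2$, writes $L=\tfrac1\alpha M$, $L'=\tfrac1{\alpha'}M'$ with HNF bases $W,W'$, and bounds $\Vert A\Vert_2$ for $e^A=\tfrac{\alpha}{\alpha'}W'W^{-1}$ by entrywise norms of the HNF matrices, absorbing the resulting constant $\tfrac{n^2}{\log 2}(\tfrac{n(n+1)}{2})^n\prod_j\mathcal N(\mathfrak p_j)^{\tilde h_jn}$ multiplicatively via the lower bound on $\dist(L,L')$. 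If you want to salvage your route, you would need to either prove the simultaneous reduced-and-adapted basis exists with the claimed quality, or imitate the paper's trick of splitting off the trivial case and using $\dist(L,L')\ge\log 2$ to turn an additive bound on $\Vert A\Vert_2$ into the required multiplicative one.
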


     \begin{lemma}
\label{lem2}
    For any $x, y\in G$ and $L=f_c^\prime(x), L^\prime=f_c^\prime(y)$,
    \[\dist(L, L^{\prime})=O(n)\cdot\dist_{G/U(S)}(x, y)\]
    holds.
\end{lemma}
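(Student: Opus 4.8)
The plan is to unwind both distance definitions explicitly and exhibit, for each competitor in the infimum defining $\dist_{G/U(S)}(x,y)$, a corresponding competitor in the infimum defining $\dist(L,L')$ whose cost is at most $O(n)$ times as large. Write $x=(z,(v_j))$, $y=(z',(v_j'))$, and fix a decomposition $(z',(v_j'))=(z,(v_j))+(a,(w_j))+u$ with $u\in U(S)$ realizing (up to $\varepsilon$) the infimum in $\dist_{G/U(S)}$, so its cost is $\Vert a\Vert+\sum_j|w_j|e_j\log(p_j)$. By Proposition \ref{propperiodic} and the stated description $L'/L=\varphi(z'-z)\underline{\mathcal{O}}\prod_j\mathfrak{p}_j^{-(v_j'-v_j)}$, applying $f_c'$ kills the $U(S)$-part, so $L'/L$ is generated (as a lattice) by $\varphi(a)\underline{\mathcal{O}}\prod_j\mathfrak{p}_j^{-w_j}$. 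The task is to produce from this a basis matrix of the form $e^{\diag(a_j)}B_\omega\diag(d_j/d)$ as required by Definition \ref{thmHNF}... — i.e. Definition 5.3 — and to bound $\Vert a\Vert+\sum_j\log(d_j)+n\log(d)$ by $O(n)$ times the cost above.

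The key steps, in order: (1) Peel off the Archimedean part. The factor $\varphi(a)$ acts diagonally on $E$ as $e^{\diag(a_j)}$ (with the $\mathbb{Z}_2$ and $\mathbb{R}/\mathbb{Z}$ coordinates of $a$ contributing signs/phases, which can be absorbed into the choice of basis of $\underline{\mathcal{O}}$ or into $B_\omega$), so the Euclidean norm of the corresponding vector is exactly $\Vert a\Vert$; this matches the first summand with constant $1$. (2) Handle the prime-ideal part combinatorially. The fractional ideal $\mathfrak{a}:=\prod_j\mathfrak{p}_j^{-w_j}$ can be written as $\mathfrak{n}\mathfrak{d}^{-1}$ with $\mathfrak{n},\mathfrak{d}$ coprime integral ideals dividing $\prod_j\mathfrak{p}_j^{|w_j|}$; choose $d\in\mathbb{Z}_{>0}$ to be (a suitable integer multiple related to) $\mathcal{N}(\mathfrak{d})$ so that $d\,\mathfrak{a}\subseteq\mathcal{O}$, then apply Theorem \ref{thmEDT} to the inclusion $d\,\mathfrak{a}\cdot\underline{\mathcal{O}}\subseteq\underline{\mathcal{O}}$ to obtain the elementary divisors $d_1,\dots,d_n$ and a basis $B_\omega$ of $\underline{\mathcal{O}}$ with $B_\omega\diag(d_j)$ a basis of $d\,\mathfrak{a}\cdot\underline{\mathcal{O}}$; dividing by $d$ gives the desired shape $B_\omega\diag(d_j/d)$ for a basis of $\mathfrak{a}\cdot\underline{\mathcal{O}}$, hence (after multiplying by $e^{\diag(a_j)}$) of $L'/L$. (3) Estimate the sizes: $\prod_j d_j=[\underline{\mathcal{O}}:d\,\mathfrak{a}\underline{\mathcal{O}}]=d^n\mathcal{N}(\mathfrak{a})=d^n\prod_j p_j^{-e_jw_j}$ by Theorem \ref{thmEDT}(2) (or Theorem \ref{thmHNF}), so $\sum_j\log d_j=n\log d+\sum_j(-e_jw_j)\log p_j\le n\log d+\sum_j|w_j|e_j\log p_j$. (4) Bound $\log d$: taking $d=\mathcal{N}(\mathfrak{d})\le\prod_j\mathcal{N}(\mathfrak{p}_j)^{\max(w_j,0)}$ gives $\log d\le\sum_j\max(w_j,0)\,e_j\log p_j\le\sum_j|w_j|e_j\log p_j$. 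Adding the contributions, $\Vert a\Vert+\sum_j\log d_j+n\log d\le\Vert a\Vert+(2n+1)\sum_j|w_j|e_j\log p_j=O(n)\cdot\big(\Vert a\Vert+\sum_j|w_j|e_j\log p_j\big)$, and taking the infimum over all decompositions (and letting $\varepsilon\to 0$) yields $\dist(L,L')\le O(n)\cdot\dist_{G/U(S)}(x,y)$.

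The main obstacle I anticipate is step (2)–(3): one must be careful that the basis $B_\omega$ produced by the elementary-divisor theorem is genuinely a basis of $\underline{\mathcal{O}}$ as an $\mathbb{R}^n$-lattice in $E$ (not merely of an abstract $\mathbb{Z}$-module), and that the diagonal scaling $\diag(d_j/d)$ is compatible with the left multiplication by $e^{\diag(a_j)}$ in the precise matrix form demanded by Definition 5.3 — in particular that the sign/phase data hidden in $a\in\mathbb{Z}_2^{n_1}\times(\mathbb{R}/\mathbb{Z})^{n_2}$ does not inflate $\Vert a\Vert$ and can be pushed into $B_\omega$. A secondary subtlety is that the infimum in Definition 5.3 ranges over \emph{all} basis matrices $L_\Delta$ of $L'/L$ and over all integral bases $B_\omega$ of $\mathcal{O}$ simultaneously, so it suffices to exhibit one valid triple $(a_j,B_\omega,d_j/d)$; the comparison is therefore one-directional, which is all that Lemma \ref{lem2} requires, and this keeps the argument from needing any lower-bound control on $\dist(L,L')$.
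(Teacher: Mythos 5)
Your proposal is correct and follows essentially the same route as the paper: both reduce to bounding a single competitor $\Vert a\Vert+\sum_j\log d_j+n\log d$ in the infimum defining $\dist(L,L')$, using the elementary divisor theorem (Theorem \ref{thmEDT}) together with the index/norm computation to control $\sum_j\log d_j$ and $n\log d$ by $O(n)\sum_j|w_j|e_j\log p_j$. Your write-up is in fact more explicit than the paper's (in particular the choice $d=\mathcal{N}(\mathfrak{d})$ and the identity $\prod_j d_j=d^n\mathcal{N}(\mathfrak{a})$), but the underlying argument is the same.
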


\begin{proof}[Proof for Theorem~\ref{thm1}]
    Combining Theorem~\ref{thmLip}, Lemma~\ref{lem1} and Lemma~\ref{lem2}, we have that
    \[\Vert\ket{f_q(x)}-\ket{f_q(y)}\Vert= O\left(n^{2n+3}+\prod_j\mathcal{N}(\mathfrak{p}_j)^{c_jn}\right)\cdot\Lip (f_q)\cdot\dist_{G/U(S)}(x, y)\]
    for some constants $c_j$, which implies that
    \[\Lip(\funcSunit )=O\left(m^{2m+3}+\prod_j\mathcal{N}(\mathfrak{p}_j)^{c_jm}\cdot\Lip(\funcUnit )\right),\]
    and hence
    \[\log\Lip(\funcSunit )=O\left(m^2+m\log\Delta+m\sum_{j=1}^{|S|}\log\mathcal{N}(\mathfrak{p}_j)\right)\]
    by Theorem~\ref{thmlogLip}.
    
    Notice that in~\cite[Theorem 5.4]{biasse2016efficient}, it is shown that the $(\Lip(\funcSunit), r,\epsilon)$-HSP oracle $\funcSunit$ reduces to an $(\Lip(\hat{\funcSunit}), \hat{r},\epsilon)$-HSP oracle $\hat{\funcSunit}$ defined on $\mathbb{R}^{\hat{n}}$ with $\hat{n}=2(n_1+n_2)+|S|-1$, and moreover, $O(\log\Lip(\hat{\funcSunit}))=O(\log\Lip(\funcSunit))$.
    In order to apply the CHSP framework (\cite[Theorem 1]{de2020quantum}) on $\hat{\funcSunit}$, one needs that $\epsilon <1/4$.
    According to~\cite[Theorem 36]{barbulescu2023special}, we can take tensor product $\otimes^c\hat{\funcSunit}$ with a constant $c$ large enough such that the resulting oracle hides the same lattice.
    It is a $(c\Lip(\hat{\funcSunit}), \hat{r}, \epsilon^c)$-HSP oracle satisfying that $\epsilon^c<1/4$.
    Hence by applying Corollary~\ref{corspcpxty} with $f=\mathcal{F}^{\prime}$, we obtain the number of qubits 
    \[O(m^5+m^4\log\Delta+m^4\sum_{j=1}^{|S|}\log\mathcal{N}(\mathfrak{p}_j))+O(m\log\tau^{-1})\]
    as claimed.
\end{proof}

Below, we give the proofs of the two lemmas.

\begin{proof}[Proof for Lemma~\ref{lem1}]
Fix $L=\varphi (z)\underline{\mathcal{O}}\prod\mathfrak{p}_j^{-v_j}$ and $L^\prime =\varphi (z^\prime)\underline{\mathcal{O}}\prod\mathfrak{p}_j^{-v_j^\prime}$.
Let $a, d_j, d$ be ones that satisfy $\dist(L, L^\prime)=\Vert a\Vert+\sum_j\log(d_j)+n\log(d)$.
We first consider the special case by assuming that $d_j=d=1$ for all $j$ such that $L_\Delta=e^{\diag(a_j)}B_\omega$.
Therefore, without loss of generality, we can write $L^\prime/L=\varphi(z^\prime-z)\underline{\mathcal{O}}\prod\mathfrak{p}_j^{-(v_j^\prime-v_j)}$, where $-(v_j^\prime-v_j)\geq 0$ for all $j$.
It is implied that $1/\varphi(z)L\supseteq L^\prime$, and that there exist the HNF-basis $H$ for $L^\prime$ and a matrix $A$ satisfying $e^A=He^{\diag((-z)_j)}$ such that
\[\begin{aligned}
    \dist_g(L, L^\prime) & \leq\Vert A\Vert_2\\
    & \leq n^2\cdot\det(H)\cdot\dist(L, L^\prime)\\
    & \leq n^2\prod\mathcal{N}(\mathfrak{p}_j)^{c_j}\cdot\dist(L, L^\prime)
\end{aligned}\]
by Theorem \ref{thmHNF} for some constants $c_j$.

Now suppose that not all of $d_j$ and $d$ are ones, so that $\sum_j\log(d_j)+n\log(d)\geq\log 2$.
Since $L$ and $L^\prime$ are fractional ideals of $\mathcal{O}$, there exist $\alpha ,\alpha^\prime\in K$ such that they can be written as $L=\frac{1}{\alpha}M$ and $L^\prime=\frac{1}{\alpha^\prime}M^\prime$ for some integral ideals $M, M^\prime$ in $\mathcal{O}$.
Let $W, W^\prime\in\operatorname{GL}_n(\mathbb{Z})$ denote the HNF-bases such that $\frac{1}{\alpha}WB_\omega$ and $\frac{1}{\alpha^\prime}W^\prime B_\omega$ are basis for $L$ and $L^\prime$, respectively.
Under the condition, we can obtain an upper bound for the matrix $A$ satisfying that $e^A=\alpha/\alpha^\prime W^\prime W^{-1}$ by Theorem \ref{thmHNF} and the inequality $\dist(L, L^\prime)\geq\log 2$ derived from the above.
\[\begin{aligned}
    \dist_g(L, L^\prime) & \leq\Vert A\Vert_2\\
    & \leq\left\Vert\frac{\alpha}{\alpha^\prime}\right\Vert\Vert W^\prime\Vert_2\Vert W^{-1}\Vert_2\\
    & \leq n^2\prod_j\mathcal{N}(\mathfrak{p}_j)^{h_j}\sqrt{\sum_jj|w_{j, j}^\prime|^2}\cdot\frac{\Vert W\Vert_2^{n-1}}{|\det W|}\\
    & \leq n^2\prod_j\mathcal{N}(\mathfrak{p}_j)^{h_j}\frac{n(n+1)}{2}\prod_jw_{j, j}^\prime\left(\frac{n(n+1)}{2}\prod_jw_{j, j}\right)^{n-1}\\
    & \leq\frac{n^2}{\log 2}\left(\frac{n(n+1)}{2}\right)^n\prod_j\mathcal{N}(\mathfrak{p}_j)^{\Tilde{h}_jn}\cdot\dist(L, L^\prime)
\end{aligned}\]
for some constants $h_j$ and $\Tilde{h}_j$.

Hence we can obtain that $\dist_g(L, L^\prime)=O(n^{2n+2}+\prod_j\mathcal{N}(\mathfrak{p}_j)^{\Tilde{h}_jn})\cdot\dist(L, L^\prime)$ as claimed.
\end{proof}

\begin{proof}[Proof for Lemma~\ref{lem2}]
    Fix $L=\varphi (z)\underline{\mathcal{O}}\prod\mathfrak{p}_j^{v_j}$ and $L^\prime =\varphi (z^\prime)\underline{\mathcal{O}}\prod\mathfrak{p}_j^{v_j^\prime}$ that are the images of $(z, (v_j))$ and $(z^\prime , (v_j^\prime))$ under the map $f_c^\prime$, respectively.
    We write
    \[\dist_{G/U(S)}((z, (v_j)), (z^\prime , (v_j^\prime)))=\Vert z-z^\prime-u\Vert +\sum |v_j-v_j^\prime-w_j|e_j\log p_j\]
    for some $(u, (w_j)))\in U(S)$.
    Notice that if $(z-z^\prime-u, (v_j-v_j^\prime-w_j))=0$, i.e., $(z, (v_j))$ and $(z^\prime , (v_j^\prime))$ satisfy the condition in Proposition~\ref{propperiodic} that they are different by an $S$-unit, then $L=L^\prime$, and moreover, $L^\prime/L=L/L^\prime=\underline{\mathcal{O}}$, which implies that
    \[\dist(L, L^\prime)=\dist_{G/U(S)}((z, (v_j)), (z^\prime , (v_j^\prime)))=0.\]

    Now we suppose that $(z-z^\prime-u, (v_j-v_j^\prime-w_j))\neq 0$ and that $z, z^\prime, v_j, v_j^\prime$ satisfy the infimum, i.e., $\dist_{G/U(S)}((z, (v_j)), (z^\prime , (v_j^\prime)))=\Vert z-z^\prime\Vert +\sum |v_j-v_j^\prime|e_j\log p_j$.
    From the definition of the distance $\dist(\cdot,\cdot)$ among $E$-ideals, we have that
    \[\dist(L, L^\prime)\leq\Vert z-z^\prime\Vert +\sum_j\log d_j+n\log d\]
    for some $d_j, d\in\mathbb{Z}$ such that $\prod d_j=\prod\mathcal{N}(\mathfrak{p}_j)^{\min\{-(v_j^\prime-v_j), 0\}}$ and $d\leq\prod\mathcal{N}(\mathfrak{p}_j)^{\max\{-(v_j^\prime-v_j), 0\}}$ by Theorem~\ref{thmEDT}.
    Therefore, the upper bound for the distance can be taken as
    \[\dist(L, L^\prime)\leq\Vert z-z^\prime\Vert +n\sum_jc_j\log\mathcal{N}(\mathfrak{p}_j)\]
    for some constants $c_j$.
    Then we can derive that
    \[\begin{aligned}
    \dist(L, L^\prime) & \leq\Vert z-z^\prime\Vert +n\sum_jc_je_j\log p_j\\
    & \leq nh_j\cdot\dist_{G/U(S)}((z, (v_j)), (z^\prime , (v_j^\prime))),
    \end{aligned}\]
    where $h_j$ are constants.
    Hence, it follows that
    \[\dist(L, L^\prime)=O(n)\cdot\dist_{G/U(S)}((z, (v_j)), (z^\prime , (v_j^\prime)))\]
    as claimed.
\end{proof}

\section{Reductions to $S$-unit computation}

Reference~\cite{biasse2016efficient} proposed two problems that can be reduced to $S$-unit group computation, the class group problem, and the principal ideal problem.

\subsection{Class group problem}

In the class group algorithm from~\cite{biasse2016efficient}, the set is taken as
\[S_{\text{CGP}}=\{\mathfrak{p}:\text{ prime }\mid\mathcal{N}(\mathfrak{p})\leq 48(\log |\Delta|)^2\},\]
and the output is the Smith normal form (SNF)  of the valuations of the result of $S_{\text{CGP}}$-unit group computation.
The computation of the SNF is classical, so for the quantum space complexity, it suffices to evaluate $|S_{\text{CGP}}|$.
We approximate the number of rational primes smaller than $48(\log |\Delta|)^2$ with the following theorem, which was first discovered by Gauss; see, for example, \cite{Apostol_ANT}.
Denote by $\pi (x)$ the number of rational primes less than $x$.

\begin{theorem}[The prime number theorem]
    $\pi (x)\sim\frac{x}{\log x}$ as $x\rightarrow\infty$.
\end{theorem}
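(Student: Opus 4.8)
The plan is to prove the Prime Number Theorem in the equivalent form for the Chebyshev function and then transfer the asymptotic to $\pi(x)$ by partial summation. Concretely, I would set $\psi(x)=\sum_{p^k\le x}\log p=\sum_{n\le x}\Lambda(n)$, where $\Lambda$ is the von Mangoldt function, and first record the standard reduction: $\psi(x)\sim x$ if and only if $\pi(x)\sim x/\log x$. The easy inequality $\psi(x)\le\pi(x)\log x$ gives one bound, and for the reverse one uses $\psi(x)\ge\sum_{x^{1-\varepsilon}<p\le x}\log p\ge(1-\varepsilon)\log x\,(\pi(x)-\pi(x^{1-\varepsilon}))$ together with $\pi(x^{1-\varepsilon})=o(x/\log x)$; Abel summation then yields the equivalence. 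So it suffices to prove $\psi(x)\sim x$.

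For that I would take the analytic (zeta-function) route. Introduce $\zeta(s)=\sum_{n\ge1}n^{-s}$, valid for $\Re(s)>1$, together with its Euler product $\prod_p(1-p^{-s})^{-1}$; logarithmic differentiation gives $-\zeta'(s)/\zeta(s)=\sum_{n\ge1}\Lambda(n)n^{-s}$ on $\Re(s)>1$. Next I would continue $\zeta$ meromorphically past the line $\Re(s)=1$, with a single simple pole at $s=1$ of residue $1$, for instance via $\zeta(s)-\frac{1}{s-1}=\sum_n\big(n^{-s}-\int_n^{n+1}t^{-s}\,dt\big)$, which converges for $\Re(s)>0$. The crucial input is the non-vanishing statement $\zeta(1+it)\ne0$ for every real $t\ne0$, which I would obtain from the elementary positivity inequality $3+4\cos\theta+\cos2\theta=2(1+\cos\theta)^2\ge0$ applied to $\log|\zeta(\sigma)^3\zeta(\sigma+it)^4\zeta(\sigma+2it)|$ as $\sigma\downarrow1$.

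With the pole and the zero-free information in hand, the final step is a Tauberian argument: apply Newman's contour-integral Tauberian theorem (or the Wiener--Ikehara theorem) to $\sum\Lambda(n)n^{-s}$, equivalently to $\int_1^\infty\psi(x)x^{-s-1}\,dx=\frac{1}{s}\cdot\big(-\frac{\zeta'(s)}{\zeta(s)}\big)$. The analyticity of $-\zeta'/\zeta-\frac{1}{s-1}$ on $\Re(s)\ge1$ then forces $\psi(x)\sim x$, and combining with the reduction above gives $\pi(x)\sim x/\log x$.

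I expect the Tauberian passage to be the main obstacle: turning analytic information about $-\zeta'/\zeta$ on the line $\Re(s)=1$ into the pointwise asymptotic $\psi(x)\sim x$ requires either a careful contour shift with growth estimates for $\zeta$ and $\zeta'/\zeta$ in vertical strips (Newman's route, which also needs the mild bound $\zeta(s)=O(|t|)$ near the line) or the full machinery of Wiener--Ikehara. By contrast, the non-vanishing of $\zeta$ on $\Re(s)=1$ is conceptually the heart of the proof but technically short, and the reduction to $\psi$ is routine bookkeeping. An alternative, entirely real-variable route would be the Erd\H{o}s--Selberg elementary proof via Selberg's symmetry formula $\sum_{n\le x}\Lambda(n)\log n+\sum_{mn\le x}\Lambda(m)\Lambda(n)=2x\log x+O(x)$, but that is considerably longer to carry through; since the statement is only invoked here as a classical fact, the analytic proof sketched above is the most economical.
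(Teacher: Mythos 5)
The paper does not prove this statement at all: the prime number theorem is invoked as a classical fact with a pointer to Apostol's textbook, so there is no internal proof to compare against. Your sketch is the standard analytic route (reduction to $\psi(x)\sim x$ via Chebyshev/Abel summation, Euler product and logarithmic derivative of $\zeta$, meromorphic continuation past $\Re(s)=1$, non-vanishing on the line via the $3+4\cos\theta+\cos 2\theta\ge 0$ trick, then a Newman or Wiener--Ikehara Tauberian step), and it is a correct outline --- essentially the proof in the very reference the paper cites --- with the honest caveat, which you state yourself, that the Tauberian passage is being invoked rather than carried out in detail.
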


Let $e=\left\lfloor\frac{\log x}{\log 2}\right\rfloor$ and $p$ denotes a rational prime number.
Then the number of rational prime powers that are smaller than $x$ is
\[\begin{aligned}
    \sum_{p\leq 1}1+\cdots +\sum_{p^e\leq x}1=\frac{x}{\log x}+\frac{x^{\frac{1}{2}}}{\frac{1}{2}\log x}+\cdots +\frac{x^{\frac{1}{e}}}{\frac{1}{e}\log x}
\end{aligned}\]
and has the order $O(x/\log x)$.
From Corollary~\ref{cor1}, we can obtain the quantum space complexity of the class group algorithm.

\begin{corollary}
    Under the Generalised Riemann Hypothesis, the class group computation algorithm (\cite[Theorem 1.2]{biasse2016efficient}) uses 
    $O(n^5+n^4(\log\Delta)^4/\log\log\Delta)+O(n\log\tau^{-1})$
    qubits.
\end{corollary}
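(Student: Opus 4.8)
The plan is to specialise Corollary~\ref{cor1} to the factor base $S=S_{\text{CGP}}$ used by the class group algorithm of~\cite{biasse2016efficient}, so that everything reduces to estimating the two quantities $|S_{\text{CGP}}|$ and $\max_{\mathfrak{p}\in S_{\text{CGP}}}\log\mathcal{N}(\mathfrak{p})$ that appear in that bound. First I would record that the class group algorithm outputs the Smith normal form of the valuation matrix produced by the $S_{\text{CGP}}$-unit computation, and that forming this Smith normal form is classical post-processing; it therefore adds no qubits, and the quantum space cost of the class group algorithm equals that of the underlying $S_{\text{CGP}}$-unit computation, to which Corollary~\ref{cor1} applies directly with $S=S_{\text{CGP}}$.

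Next I would estimate the two parameters. Writing $B=48(\log|\Delta|)^2$, every $\mathfrak{p}\in S_{\text{CGP}}$ satisfies $\mathcal{N}(\mathfrak{p})\le B$, so
\[\max_{\mathfrak{p}\in S_{\text{CGP}}}\log\mathcal{N}(\mathfrak{p})\le\log B=O(\log\log|\Delta|).\]
For the cardinality, the norm of a prime ideal is a rational prime power, so $|S_{\text{CGP}}|$ is at most the number of prime ideals of $\mathcal{O}$ whose norm is a prime power $\le B$. By the prime number theorem --- the Chebyshev upper bound already suffices, so this part is unconditional --- the number of rational prime powers below $B$ is $O(B/\log B)=O((\log|\Delta|)^2/\log\log|\Delta|)$, and at most $n$ prime ideals of $\mathcal{O}$ lie above a fixed rational prime, giving $|S_{\text{CGP}}|=O(n(\log|\Delta|)^2/\log\log|\Delta|)$; using Minkowski's bound $n=O(\log|\Delta|)$, the extra factor of $n$ can, if desired, be absorbed into a quantity depending only on $\log|\Delta|$. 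The Generalised Riemann Hypothesis enters here only to guarantee that this truncated factor base does generate the class group --- that is, Bach's bound~\cite{bach1990explicit}, on which the class group algorithm of~\cite{biasse2016efficient} already relies --- and not in the counting above.

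Finally I would substitute these estimates into the bound of Corollary~\ref{cor1}. The $S$-dependent term $n^4\cdot|S_{\text{CGP}}|\cdot\max_{\mathfrak{p}\in S_{\text{CGP}}}\log\mathcal{N}(\mathfrak{p})$ is then $O(n^4(\log\Delta)^4/\log\log\Delta)$, which absorbs the $n^4\log\Delta$ term, while the remaining contributions are the unchanged $O(n^5)$ and $O(n\log\tau^{-1})$; collecting these gives the asserted qubit count. I expect the one point requiring care to be the number-theoretic bookkeeping for $|S_{\text{CGP}}|$ --- counting prime ideals rather than merely rational prime powers, and keeping track of which estimates are unconditional versus inherited from the GRH-conditional correctness of the Biasse--Song class group algorithm --- rather than anything in the complexity reduction itself, which is a routine substitution into Corollary~\ref{cor1}.
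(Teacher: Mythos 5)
Your proposal is correct and follows essentially the same route as the paper: observe that the Smith normal form step is classical, bound $|S_{\text{CGP}}|$ by counting prime-power norms below $48(\log|\Delta|)^2$ via the prime number theorem, and substitute into Corollary~\ref{cor1}. Your accounting is in fact slightly more careful than the paper's (you track the factor of at most $n$ prime ideals above each rational prime and note that GRH enters only through the correctness of the truncated factor base, not the counting), and your estimate $|S_{\text{CGP}}|\cdot\max_{\mathfrak{p}}\log\mathcal{N}(\mathfrak{p})=O(n(\log\Delta)^2)$ is formally tighter than, and hence implies, the stated $O((\log\Delta)^4/\log\log\Delta)$ term.
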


\subsection{Principal ideal problem}

In the algorithm reducing PIP to $S$-units from~\cite{biasse2016efficient}, the first step, that factors the input ideal to the product of powers of prime ideals $\mathfrak{a}=\mathfrak{p}_1^{a_1}\cdots\mathfrak{p}_k^{a_k}$, is quantum.
The set $S$ is taken to be the prime ideals dividing the ideal, i.e., $S_{\text{PIP}}=\{\mathfrak{p}_1,\dots ,\mathfrak{p}_k\}$.
The last step following the $S_{\text{PIP}}$-unit group computation is to classically solve equations on the valuations.

For the ideal factorisation, we follow the algorithm from~\cite[Algorithm 2]{EH10}, which shows that factoring integers is the only computationally difficult part.

\begin{prop}[{\cite[Lemma 4.1]{EH10}}]
    Factoring fractional ideals of $K$ into a product of prime ideals of $\mathcal{O}$ reduces to factoring integers in polynomial time in $\log\Delta$ and $n$.
\end{prop}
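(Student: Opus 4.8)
The plan is to exhibit an explicit algorithm whose only super-polynomial subroutine is a call to an integer-factorisation oracle, and to check that everything else runs in time polynomial in $n$ and $\log\Delta$. First I would reduce from fractional to integral ideals: given a fractional ideal $\mathfrak{a}$ presented by a $\mathbb{Z}$-basis (equivalently by a matrix with respect to the fixed integral basis $B_\omega$ of $\mathcal{O}$), clear denominators to write $\mathfrak{a}=\tfrac{1}{d}\mathfrak{b}$ with $d\in\mathbb{Z}_{>0}$ and $\mathfrak{b}\subseteq\mathcal{O}$ integral. Since $d\mathcal{O}=\prod_{p\mid d}(p\mathcal{O})^{v_p(d)}$, factoring $\mathfrak{a}$ reduces to factoring the integral ideal $\mathfrak{b}$ together with factoring the single rational integer $d$; this step is polynomial (Hermite normal form and lattice arithmetic).

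Next I would compute $N\coloneqq\mathcal{N}(\mathfrak{b})=|\mathcal{O}/\mathfrak{b}|$, which by Theorem~\ref{thmHNF} is the determinant of the HNF of $\mathfrak{b}$ and hence polynomial-time computable, and then \emph{invoke the integer-factorisation oracle} on $N$ to get $N=\prod_i p_i^{a_i}$. Every rational prime lying below a prime divisor of $\mathfrak{b}$ occurs among the $p_i$, so it remains to decompose each $p_i\mathcal{O}$ into prime ideals of $\mathcal{O}$ and to extract the exponents. For a fixed $p=p_i$ with $p\nmid[\mathcal{O}:\mathbb{Z}[\theta]]$ (for a primitive element $\theta$) this is the Kummer--Dedekind theorem: factor the minimal polynomial of $\theta$ modulo $p$, a polynomial-time operation over $\mathbb{F}_p$, and read off the prime ideals $\mathfrak{p}_j$ directly. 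For the finitely many ``bad'' primes dividing the index one instead runs a general prime-decomposition routine inside $\mathcal{O}/p\mathcal{O}$ (the idempotent/radical-splitting method of Buchmann--Lenstra type), which is polynomial in $n$ and $\log p=O(\log\Delta)$ and — crucially — only needs $p$ to be prime, making no further calls to integer factorisation. Finally, for each $\mathfrak{p}_j$ the valuation $v_{\mathfrak{p}_j}(\mathfrak{b})$ is found by repeated ideal division ($\mathfrak{b},\mathfrak{b}\mathfrak{p}_j^{-1},\dots$, each integrality test being an HNF computation) or by linear algebra over $\mathbb{F}_p$; the number of steps is bounded by $v_p(N)/\min_j f(\mathfrak{p}_j)=O(\log\Delta)$. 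Collecting the $\mathfrak{p}_j^{v_{\mathfrak{p}_j}(\mathfrak{b})}$ over all $i$ and subtracting the contribution of $d$ yields the factorisation of $\mathfrak{a}$.

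The main obstacle is not really a mathematical one: the reduction ``spends'' all of its hardness on the single call to factor $N=\mathcal{N}(\mathfrak{b})$, exactly as the statement asserts. The genuine content of the proof is the careful bookkeeping that every other step — clearing denominators, computing $N$, decomposing the index-dividing primes, and extracting exponents — is polynomial in $n$ and $\log\Delta$ with no further appeal to integer factorisation; for this one cites the standard polynomial-time algorithms for Hermite normal form, for polynomial factorisation over finite fields, and for prime-ideal decomposition in a given maximal order.
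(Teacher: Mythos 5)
Your proposal is correct and follows essentially the same route as the paper, which simply presents Algorithm 2 of~\cite{EH10}: clear denominators, compute the ideal norm via the Hermite normal form, spend the single hard step on factoring that integer, decompose the resulting rational primes into prime ideals, read off valuations, and subtract the contribution of the denominator. Your added detail on how the prime decomposition is carried out (Kummer--Dedekind for primes not dividing the index, a Buchmann--Lenstra-type routine for the rest) is a correct fleshing-out of a step the paper leaves to the cited reference.
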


A fractional ideal $I$ is given as $d\in\mathcal{O}$ and $A$, the integer which makes $dI$ an integral ideal and a matrix of a basis of $\mathcal{O}$, respectively.

\begin{enumerate}
    \item Compute the norm $N=\mathcal{N}(dI)$.
    \item Factor $N=\prod p^{e_p}$ with $e_p>0$.
    \item For each $p$ dividing $N$, compute the prime ideals $\mathfrak{p}_1,\dots ,\mathfrak{p}_k$ above $p$.
    \item For each $p$ dividing $N$, and each $\mathfrak{p}\supset p\mathcal{O}$ from Step (3), compute $v_{\mathfrak{p}}(dI)$, giving the exponent of $\mathfrak{p}$ in the factorization of $dI$.
    \item For each $\mathfrak{p}$ found with nonzero valuation, output $\mathfrak{p}$, $v_{\mathfrak{p}}(dI)$.
    We have $dI=\prod\mathfrak{p}^{v_{\mathfrak{p}}(dI)}$.
    \item Repeat steps (1)-(5) for the integral ideal $d\mathcal{O}$, then subtract the exponents of $d\mathcal{O}$ from the exponents computed above for the ideal $dI$ for each prime, giving the primes $\mathfrak{p}$ and the exponents $v_{\mathfrak{p}}$ such that $I=\prod\mathfrak{p}^{v_{\mathfrak{p}}}$.
\end{enumerate}

By the multiplicative rule of the ideal norms, for $\mathfrak{a}=\prod\mathfrak{p}^{v_{\mathfrak{p}(\mathfrak{a})}}$, we can factor its norm into $\mathcal{N}(\mathfrak{a})=\prod\mathcal{N}(\mathfrak{p})^{v_{\mathfrak{p}(\mathfrak{a})}}$.
Therefore the norms of $d\mathfrak{a}$ and $d$ will be $\mathcal{N}(d\mathfrak{a})=\prod\mathcal{N}(\mathfrak{p})^{\max\{ 0, v_{\mathfrak{p}(\mathfrak{a})}\}}$ and $\mathcal{N}(d)=\mathcal{N}(d\mathcal{O})=\prod\mathcal{N}(\mathfrak{p})^{-\min\{ 0, v_{\mathfrak{p}(\mathfrak{a})}\}}$, respectively.
Hence, if we only do Step 2 in quantum, which factors positive integers $\mathcal{N}(d\mathfrak{a})$ and $\mathcal{N}(d\mathcal{O})$, then by~\cite{Shor94}, the number of qubits used for ideal factorization will be $O\left(\log\left(\mathcal{N}(d)\cdot (\mathcal{N}(\mathfrak{a})+1)\right)\right)=O(\log\mathcal{N}(\mathfrak{a}))$.
By Theorem~\ref{thm1}, to compute the $S_{\text{PIP}}$-unit group costs $O(n^5+n^4\log\Delta+n^4\log\mathcal{N}(\mathfrak{a}))+O(n\log\tau^{-1})$ qubits, and hence it turns out to be the quantum space complexity for PIP as claimed in Corollary~\ref{corPIP}.

\medskip
\printbibliography

\end{document}